\numberwithin{equation}{section}
\newtheorem{thm}{Theorem}[section]
\newtheorem{prop}[thm]{Proposition}
\newtheorem{lem}[thm]{Lemma}
\newtheorem{cor}[thm]{Corollary}
\newtheorem{conj}[thm]{Conjecture}
\theoremstyle{definition}
\newtheorem*{eg}{Example}
\theoremstyle{remark}
\newcommand{\C}{\mathbb{C}}
\newcommand{\Z}{\mathbb{Z}}
\newcommand{\mc}{\mathcal}
\newcommand{\g}{\mathfrak{g}}
\newcommand{\gl}{\mathfrak{gl}}
\newcommand{\h}{\mathfrak{h}}
\newcommand{\n}{\mathfrak{n}}
\newcommand{\zg}{\mathfrak{z}(\widehat{\g})}
\newcommand{\End}{\mathrm{End}}
\newcommand{\sing}{{\mathrm{sing}}}
\newcommand{\pa}{\partial}
\newcommand{\gge}{\geqslant}
\newcommand{\lle}{\leqslant}
\newcommand{\la}{\lambda}
\newcommand{\bla}{\bm\lambda}
\newcommand{\bmx}{\begin{pmatrix}}
\newcommand{\emx}{\end{pmatrix}}
\newcommand{\Ug}{\mathrm{U}(\mathfrak{g})}
\begin{document}

\allowdisplaybreaks

\newcommand{\arXivNumber}{2008.06825}

\renewcommand{\thefootnote}{}

\renewcommand{\PaperNumber}{132}

\FirstPageHeading

\ShortArticleName{Perfect Integrability and Gaudin Models}

\ArticleName{Perfect Integrability and Gaudin Models\footnote{This paper is a~contribution to the Special Issue on Representation Theory and Integrable Systems in honor of Vitaly Tarasov on the 60th birthday and Alexander Varchenko on the 70th birthday. The full collection is available at \href{https://www.emis.de/journals/SIGMA/Tarasov-Varchenko.html}{https://www.emis.de/journals/SIGMA/Tarasov-Varchenko.html}}}

\Author{Kang LU~}

\AuthorNameForHeading{K.~Lu}

\Address{Department of Mathematics, University of Denver, 2390 S.~York St., Denver, CO 80210, USA}
\Email{\href{mailto:Kang.Lu@du.edu}{Kang.Lu@du.edu}}
\URLaddress{\url{https://kanglu.me}}

\ArticleDates{Received August 26, 2020, in final form December 02, 2020; Published online December 10, 2020}

\Abstract{We suggest the notion of perfect integrability for quantum spin chains and conjecture that quantum spin chains are perfectly integrable. We show the perfect integrability for Gaudin models associated to simple Lie algebras of all finite types, with periodic and regular quasi-periodic boundary conditions.}

\Keywords{Gaudin model; Bethe ansatz; Frobenius algebra}

\Classification{82B23; 17B80}

\renewcommand{\thefootnote}{\arabic{footnote}}
\setcounter{footnote}{0}

\section{Introduction}
Quantum spin chains are important models in integrable system. These models have numerous deep connections with other areas of mathematics and physics. In this article, we would like to suggest the notion of perfect integrability for quantum spin chains.

Let us recall Gaudin models and XXX spin chains. Let $\g$ be a simple (or reductive) Lie (super)algebra and $G$ the corresponding Lie group. Let $\mathscr A_{\g}$ be an affinization of $\g$ where the universal enveloping algebra $\Ug$ of $\g$ can be identified as a Hopf subalgebra of $\mathscr A_{\g}$. Here $\mathscr A_{\g}$ is either the universal enveloping algebra of the current algebra $\mathrm{U}(\g[t])$ which describes the symmetry for Gaudin models, or Yangian $\mathrm{Y}(\g)$ associated to $\g$ for XXX spin chains. In both cases the algebra $\mathscr A_{\g}$ has a remarkable commutative subalgebra called the \emph{Bethe algebra}. We denote the Bethe algebra by $\mathscr B_{\g}$. The Bethe algebra $\mathscr B_{\g}$ commutes with $\Ug$. Take any finite-dimensional irreducible representation $M$ of $\mathscr A_{\g}$, then $\mathscr B_{\g}$ acts naturally on the space of singular vectors $M^\sing$. Let $\mathscr B_{\g}(M^\sing)$ be the image of $\mathscr B_{\g}$ in $\End(M^\sing)$. The problem is to study the spectrum of $\mathscr B_{\g}(M^\sing)$ acting on $M^\sing$.\footnote{The reason these models are called spin chains is that $M$ is usually a tensor product of evaluation modules where each factor corresponds to a particle of some spin.}

With the agreement with the philosophy of geometric Langlands correspondence, it is important to understand and describe the finite-dimensional algebra $\mathscr B_{\g}(M^\sing)$ and the corresponding scheme $\mathsf{Spec}(\mathscr B_{\g}(M^\sing))$. Or more generally, find a geometric object parameterizing the eigenspaces of $\mathscr B_{\g}$ when $M$ runs over all finite-dimensional irreducible representations (up to isomorphism). In Gaudin models, the underlying geometric objects are described by the sets of monodromy-free $^L\g$-opers with regular singularities of prescribed residues at evaluation points, see \cite{FFR10,Ryb18}, where $^L\g$ is the Langlands dual of $\g$. Moreover, when $\g=\gl_{N}$, the algebra $\mathscr B_{\g}(M^\sing)$ is interpreted as the space of functions on the intersection of suitable Schubert cycles in a Grassmannian variety, see~\cite{MTV09}. This interpretation gives a relation between representation theory and Schubert calculus useful in both directions which has important applications in real algebraic geometry, see \cite{MT16, MTV09}.

Any finite-dimensional unital commutative algebra $\mc B$ is a module over itself induced by left multiplication. We call this module the \emph{regular representation of} $\mc B$. The dual space $\mc B^*$ is naturally a $\mc B$-module which is called the \emph{coregular representation}. A Frobenius algebra is a~finite-dimensional unital commutative algebra whose regular and coregular representations are isomorphic, see Section~\ref{sec frob-alg}.

Let $V$ be a finite-dimensional $\mathcal B$-module. Let $\mc B(V)$ be the image of $\mc B$ in $\End(V)$. We say that the $\mathcal B$-module $V$ is \emph{perfectly integrable} if $\mathcal B$ acts on $V$ cyclically and the algebra $\mc B(V)$ is a~Frobenius algebra. Note that in this case, the $\mc B(V)$-module $V$ is isomorphic to the regular and coregular representations of $\mc B(V)$.

Based on the extensive study of quantum spin chains, see the evidences from \cite{CLV20,FFR10,LM19,MTV08,MTV09,MTV14,Ryb18}, the following conjecture is expected to hold.

\begin{conj}\label{conj frob-int}
The $\mathscr B_{\g}$-module $M^\sing$ is perfectly integrable.
\end{conj}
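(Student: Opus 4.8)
The plan is to reduce perfect integrability to two ingredients---an invariant nondegenerate form and cyclicity---and to isolate cyclicity as the real difficulty. The organizing principle is the following abstract criterion, which I would establish first. Let $\mathscr{B}$ be a finite-dimensional commutative algebra acting faithfully on a space $V$ (automatic for the image $\mathscr{B}(V)\subseteq\End(V)$), suppose the action is cyclic with cyclic vector $v$, and suppose $V$ carries a symmetric nondegenerate bilinear form $(\cdot\,,\cdot)$ with respect to which every element of $\mathscr{B}$ is self-adjoint. Then $\mathscr{B}(V)$ is Frobenius, so $V$ is perfectly integrable. Indeed, faithfulness and cyclicity already force $V\cong\mathscr{B}(V)$ as the regular representation through $b\mapsto bv$; transporting the form along this isomorphism gives a symmetric nondegenerate pairing $\langle a,b\rangle=(av,bv)$ on $\mathscr{B}(V)$, and self-adjointness yields $\langle ab,c\rangle=(abv,cv)=(av,bcv)=\langle a,bc\rangle$, so the pairing is a Frobenius form and the coregular representation is identified with the regular one.

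With this in hand, the Gaudin case amounts to verifying the two hypotheses for $M=\bigotimes_i L_{\lambda_i}(z_i)$ and $V=M^\sing$. For the form I would take the tensor product of the contravariant (Shapovalov) forms of the evaluation factors, restricted to $M^\sing$; for distinct evaluation points each factor form is nondegenerate and the decomposition into $\g$-isotypic components is orthogonal, so the restriction to the multiplicity spaces comprising $M^\sing$ is again nondegenerate. Self-adjointness of $\mathscr{B}_{\g}$ is then structural: the algebra is generated by the images of the Segal--Sugawara vectors spanning the Feigin--Frenkel center $\zg$, these are built from the invariant bilinear form on $\g$ and are therefore symmetric for the contravariant form, and the coalgebra structure used to spread them across the tensor factors preserves the symmetry. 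I expect both of these steps to be routine given the explicit generators of $\mathscr{B}_{\g}$; incorporating a regular twist $\chi\in\h$ into the Hamiltonians and into the form handles the quasi-periodic case in the same way.

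The main obstacle is cyclicity, which I would approach through the oper description of the spectrum recalled in the Introduction. The first step treats generic parameters: for generic evaluation points, and in the twisted case generic regular $\chi$, the joint spectrum of $\mathscr{B}_{\g}$ on $M^\sing$ is simple, which I would deduce from completeness of the Bethe ansatz together with the identification of joint eigenvalues with monodromy-free ${}^L\g$-opers having regular singularities of prescribed residues. A simple spectrum makes the action semisimple with one-dimensional joint eigenspaces and gives $\dim\mathscr{B}_{\g}(M^\sing)=\dim M^\sing$, so that any vector with nonzero projection onto every eigenline is cyclic. The delicate part is to remove genericity: I would organize this as a flatness argument over the space of parameters $(z_i)$ and twists, where the nondegenerate self-adjoint form of the previous paragraph persists on every fiber so that $M^\sing\cong(M^\sing)^*$ throughout. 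The real work, and the hardest part of the argument, is then to show that at resonant evaluation points or non-regular twists the dimension of the image algebra does not drop and cyclicity does not fail; controlling exactly this degeneration---ruling out the appearance of non-cyclic, non-Frobenius pieces in special fibers---is precisely where the hypotheses on the boundary conditions, periodicity or regularity of the twist, must enter. As an alternative route across all types one can instead try to identify $\mathscr{B}_{\g}(M^\sing)$ with the coordinate ring of the scheme of monodromy-free opers and prove the Frobenius property geometrically, in the spirit of the $\gl_N$ treatment via Schubert intersections, but I view the self-adjoint-form criterion above as the more uniform mechanism.
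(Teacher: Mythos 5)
Your overall architecture coincides with the paper's: an abstract criterion (cyclic action plus a nondegenerate symmetric form for which the algebra acts by self-adjoint operators implies the image algebra is Frobenius), applied with the tensor product of Shapovalov forms, with cyclicity as the remaining input. Your proof of the abstract criterion is the same as the paper's Lemma~\ref{lem frobenius} and is fine. But there is a genuine gap in the step you dismiss as ``structural'' and ``routine'': the self-adjointness of the Bethe algebra with respect to the contravariant form. This requires showing that \emph{every} element of the Feigin--Frenkel center $\zg$ is fixed by the Cartan anti-involution $\widehat\varpi$. Your justification --- that the generators ``are built from the invariant bilinear form on $\g$ and are therefore symmetric'' --- works for the quadratic Segal--Sugawara vector $S_1=\sum_a X_a[-1]^2$, but the full center is generated by $\mathrm{rank}(\g)$ Segal--Sugawara vectors of higher degrees for which no uniform explicit formula is available (explicit constructions exist only in classical types and $\mathrm G_2$, and even there the symmetry was previously proved case by case, in \cite{MTV06} for type A and \cite{Lu18} for types B and C). The paper closes this gap without explicit formulas: it first shows $\widehat\varpi(\zg)=\zg$ using Rybnikov's theorem that $\zg$ is the centralizer of $S_1$ in $\mathrm U(\g_-)$ (Proposition~\ref{prop central}), and then shows $\widehat\varpi$ acts as the identity on $\zg$ by composing with the affine Harish-Chandra projection $\mathfrak f\colon \mathrm U(\g_-)^{\h}\to\mathrm U\big(t^{-1}\h\big[t^{-1}\big]\big)$, observing $\mathfrak f\circ\widehat\varpi=\mathfrak f$ on $\zg$, and invoking Frenkel's theorem that $\mathfrak f$ is injective on $\zg$ (Propositions~\ref{prop injectivity} and~\ref{prop fix B}). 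Without an argument of this kind your self-adjointness claim is unsupported.

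The second issue is cyclicity. You correctly isolate it as the essential input, but your proposed genericity-plus-degeneration argument is left unfinished --- you yourself flag the control of special fibers as ``the real work'' and do not carry it out. In the paper this step is not reproved but imported: cyclicity of $\mathcal M_{\bla,\mu}$ over the Gaudin algebra is exactly \cite[Theorem~3.2]{Ryb18} in the periodic case and \cite[Corollary~5]{FFR10} for regular $\mu$, and the degeneration strategy you sketch is essentially Rybnikov's proof. So your plan is directionally right, but as written the proposal establishes neither of the two nontrivial inputs (the $\widehat\varpi$-invariance of $\zg$ and the cyclicity theorem) and supplies only the easy abstract Frobenius criterion.
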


\looseness=1 In fact there is a family of commutative Bethe algebras $\mathscr B_{\g}^\mu$ depending on an element $\mu\in\g^*$ (resp.~$\mu\in G$). From here to Conjecture~\ref{conj frob-int quasi}, we use the parenthesis to indicate the modifications for XXX spin chains. If $\mu\in\g^*$ (resp.~$\mu\in G$) is a regular semi-simple element, we say that the corresponding spin chain has regular quasi-periodic boundary condition. Moreover, if $\mu=0$ (resp. $\mu=\mathrm{Id}$), then the algebra $\mathscr B_{\g}^0$ (resp.~$\mathscr B_{\g}^{\mathrm{Id}}$) coincides with the algebra $\mathscr B_{\g}$ considered above. If $\mu=0$ (resp.~$\mu=\mathrm{Id}$), we say that the corresponding spin chain has periodic boundary condition.

For regular quasi-periodic spin chains the Bethe algebra does not commute with $\mathrm U(\g)$ and one replaces $M^\sing$ with $M$. Denote by $\mathscr B_{\g}^\mu(M)$ the image of $\mathscr B_{\g}^\mu$ in $\End(M)$. Let $\h$ be the Cartan subalgebra of $\g$ and $H$ the Cartan subgroup of $G$.

\begin{conj}\label{conj frob-int quasi}
If $\mu\in \h^*$ $($resp.~$\mu\in H)$ is regular, then the $\mathscr B_{\g}^\mu$-module $M$ is perfectly integrable.
\end{conj}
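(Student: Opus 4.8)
The plan is to decouple the two requirements in the definition of perfect integrability and to establish them by different mechanisms: cyclicity is a statement about the size of the Bethe algebra and will genuinely need the regularity of $\mu$, whereas the Frobenius property will follow formally once an invariant bilinear form is in hand. The organizing principle I would isolate as a lemma is the following. Let $\mc B$ be a finite-dimensional commutative algebra, $V$ a $\mc B$-module, and $A=\mc B(V)\subseteq\End(V)$ its (faithful) image. If $V$ is cyclic over $\mc B$ with cyclic vector $v_0$, then the annihilator of $v_0$ in $A$ is zero (if $av_0=0$ then $a\cdot bv_0=b\cdot av_0=0$ for all $b\in A$, so $a$ kills $Av_0=V$ and hence $a=0$), so $a\mapsto av_0$ identifies $V$ with the regular representation of $A$. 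If in addition $V$ carries a nondegenerate symmetric bilinear form for which every element of $A$ is self-adjoint, then the form induces an isomorphism $V\cong V^*$ of $A$-modules; dualizing $V\cong A$ then gives $A\cong V\cong V^*\cong A^*$, so $A$ is Frobenius and $V$ is perfectly integrable. It therefore suffices to exhibit such a self-adjoint form on $M$ and to prove that $M$ is cyclic over $\mathscr{B}_{\g}^\mu$.

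For the form I would take the tensor product, over the evaluation factors of $M$, of the contravariant (Shapovalov) forms; on each finite-dimensional irreducible factor this form is nondegenerate, so the tensor form is nondegenerate on $M$. Let $\tau$ denote the Cartan anti-involution of $\g$, extended to $\g[t]$ by $\tau(xt^k)=\tau(x)t^k$ and then to an anti-involution of $\mathrm{U}(\g[t])$; the defining property of the contravariant form is that the adjoint of the operator $x$ is the operator $\tau(x)$. It is then enough to check that $\mathscr{B}_{\g}^\mu$ can be generated by $\tau$-invariant elements. The Feigin--Frenkel center admits $\tau$-invariant Segal--Sugawara generators, the symmetric invariants underlying the higher Gaudin Hamiltonians being fixed by $\tau$, so the untwisted algebra $\mathscr{B}_{\g}$ is $\tau$-self-adjoint. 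The quasi-periodic twist enters only through the Cartan currents, and since $\tau$ fixes $\h$ pointwise, replacing $\mu=0$ by $\mu\in\h^*$ preserves $\tau$-invariance of the generators. (It is exactly here that the hypothesis $\mu\in\h^*$, rather than a general element of $\g^*$, is used.) Consequently $\mathscr{B}_{\g}^\mu$ acts on $M$ by operators self-adjoint for the tensor Shapovalov form, and by the lemma the Frobenius property follows once cyclicity is known.

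The main obstacle is cyclicity, and this is where the regularity of $\mu$ is indispensable. I would argue in two stages. For generic evaluation points $z_i$, completeness of the Bethe ansatz for a regular twist provides a basis of $M$ by joint eigenvectors of $\mathscr{B}_{\g}^\mu$ with pairwise distinct eigenvalues; a commutative algebra with simple spectrum is the full diagonal algebra in its eigenbasis, hence semisimple and cyclic, and the tensor product of highest-weight vectors, having nonzero projection onto each eigenline, is a cyclic vector. To reach arbitrary distinct $z_i$ I would use the description of $\mathsf{Spec}\big(\mathscr{B}_{\g}^\mu(M)\big)$ from \cite{FFR10,Ryb18} as the scheme of monodromy-free twisted $^L\g$-opers with the prescribed residues at the $z_i$, which presents the algebras $\mathscr{B}_{\g}^\mu(M)$ as the fibers of a finite family over the configuration space. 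The difficulty is that the function $z_i\mapsto\dim\mathscr{B}_{\g}^\mu(M)$ is only lower semicontinuous with generic value $\dim M$, so a rank drop at special configurations would destroy cyclicity; I would rule this out by proving that the oper scheme is flat of constant length $\dim M$ and fiberwise Gorenstein, which keeps $\dim\mathscr{B}_{\g}^\mu(M)=\dim M$ at every point and propagates cyclicity from the generic fiber. Establishing this flatness along the degenerate loci is the hard step, and it is precisely the geometric incarnation of the Frobenius property produced above; the XXX case is entirely parallel, with $\mathrm{U}(\g[t])$ replaced by $\rY(\g)$, the Feigin--Frenkel center by its Yangian counterpart, and opers by their difference analogues.
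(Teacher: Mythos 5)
Your overall architecture is the same as the paper's proof of the Gaudin case (Theorem~\ref{thm main}): a reduction lemma saying that cyclicity plus a nondegenerate symmetric form for which the Bethe algebra is self-adjoint forces the image algebra to be Frobenius (this is exactly Lemma~\ref{lem frobenius}, with the same annihilator argument), the tensor product of Shapovalov forms as the form, and invariance of the Feigin--Frenkel center under the Cartan anti-involution as the reason the Hamiltonians are self-adjoint. However, there are two genuine gaps. First, you assert that the Feigin--Frenkel center ``admits $\tau$-invariant Segal--Sugawara generators'' as if it were a known input. This is precisely the content of Proposition~\ref{prop fix B}, and it is not obvious: outside the classical types there are no convenient explicit formulas for a full set of generators, and the paper has to prove the invariance uniformly by decomposing $S=S_{\h}+S_{\mathfrak j}$ via the affine Harish-Chandra homomorphism and invoking its injectivity on $\zg$ (Proposition~\ref{prop injectivity}, from Frenkel's book). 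Without this step your claim that every element of $\mathscr B^\mu_\g$ acts self-adjointly is unsupported.

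Second, and more seriously, your treatment of cyclicity is incomplete and partly circular. The paper does not reprove cyclicity; it imports it as Theorem~\ref{thm cyclicity} from \cite[Corollary~5]{FFR10} (regular $\mu$) and \cite[Theorem~3.2]{Ryb18} ($\mu=0$). You instead propose to establish it by degenerating from generic evaluation points, and you correctly identify that the crux is flatness (constant length $\dim M$) of the family of spectra over the configuration space -- but you then leave exactly that step open, calling it ``the hard step.'' That step \emph{is} the theorem of \cite{FFR10,Ryb18}; nothing easier is available. Moreover, your suggestion that the needed flatness ``is precisely the geometric incarnation of the Frobenius property produced above'' inverts the logical order: in your own reduction the Frobenius property is derived \emph{from} cyclicity, so it cannot be used to supply the flatness that yields cyclicity. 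Note also that the statement you are proving is stated in the paper as a conjecture for general spin chains (including the Yangian/XXX case, where your closing sentence about ``difference analogues'' of opers is pure speculation); only the Gaudin specialization is actually proved, and there only by citing the cyclicity results above.
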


For more general $\mu\in\g^*$, one has to replace $M^\sing$ or $M$ with an appropriate subspace of $M$ depending on $\mu$, see Conjecture~\ref{conj frob-ger} in Section~\ref{sec general conj}.

When Conjectures \ref{conj frob-int}, \ref{conj frob-int quasi}, and \ref{conj frob-ger} hold, we say that the corresponding quantum spin chains are \emph{perfectly integrable}.

The perfect integrability was shown for
\begin{itemize}\itemsep=0pt
 \item Gaudin models of $\gl_N$ in \cite{MTV08,MTV09} with periodic and regular quasi-periodic boundary conditions;
 \item XXX (resp.~XXZ) spin chains of $\gl_N$ associated to irreducible tensor products of vector representations in~\cite{MTV14} (resp.~\cite{RTV14}) with periodic and regular quasi-periodic boundary conditions;
 \item XXX spin chains of $\gl_{1|1}$ associated to cyclic tensor products of polynomial representations in~\cite{LM19} with periodic and regular quasi-periodic boundary conditions;
 \item XXX spin chains of $\gl_{m|n}$ associated to irreducible tensor products of vector representations in~\cite{CLV20} with periodic boundary condition.
\end{itemize}

Our main result confirms Conjectures~\ref{conj frob-int} and~\ref{conj frob-int quasi} for Gaudin models of all finite types, see Theorem \ref{thm main}. We deduce Theorem \ref{thm main} from
\cite[Corollary~5]{FFR10}, \cite[Theorem~3.2]{Ryb18}, and \cite[Theorem~8.1.5]{Fre07}.

Our suggestion to call the situations in Conjectures~\ref{conj frob-int} and~\ref{conj frob-int quasi} ``perfect integrability" is motivated by Lemma~\ref{cor intro} below.

Let $\mathcal B$ be a finite-dimensional unital commutative algebra. Let $V$ be a finite-dimensional $\mathcal B$-module and $\mathcal E\colon \mathcal B\to \C$ a character, then the {\it $\mc B$-eigenspace} and {\it generalized $\mc B$-eigenspace} associated to $\mc E$ in $V$ are defined by{\samepage
\[
\bigcap_{a\in \mathcal B}\ker(a|_V-\mathcal E(a))\qquad\text{and}\qquad \bigcap_{a\in \mathcal B}\big(\bigcup_{m=1}^\infty\ker(a|_V-\mathcal E(a))^m\big),
\]
respectively. Let $\mc B(V)$ be the image of $\mc B$ in $\End(V)$.}

\begin{lem}\label{cor intro}
If the $\mc B$-module $V$ is perfectly integrable, then every $\mc B$-eigenspace in $V$ has dimension one, and there exists a bijection between $\mc B$-eigenspaces in $V$ and $\mathsf{Specm}(\mc B(V))$~-- the subset of closed points in $\mathsf{Spec}(\mc B(V))$. Moreover, each generalized $\mc B$-eigenspace is a cyclic $\mc B$-module, and the algebra $\mc B(V)$ is a maximal commutative subalgebra in $\End(V)$ of dimension $\dim V$.
\end{lem}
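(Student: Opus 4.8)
The plan is to reduce everything to the structure of the finite-dimensional commutative $\C$-algebra $A := \mc B(V)$, through which the $\mc B$-action factors. First I would observe that $V$ is a \emph{faithful} cyclic $A$-module: it is faithful by the very definition of $A$ as the image of $\mc B$ in $\End(V)$, and cyclic because the $\mc B$-action and the $A$-action have the same orbits. Fixing a cyclic vector $v$, the surjection $A\to V$, $a\mapsto av$, identifies $V$ with $A/I$, where $I=\mathrm{Ann}_A(v)$. Since $A$ is commutative and unital, the annihilator in $A$ of the module $A/I$ equals $I$ itself; faithfulness then forces $I=0$, so that $V\cong A$ as $A$-modules, i.e.\ $V$ realizes the regular representation. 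In particular $\dim\mc B(V)=\dim V$, and by the Frobenius hypothesis $A\cong A^*$, so $V$ also realizes the coregular representation.

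Next I would prove the maximality statement. Because $\dim A=\dim V$, the surjection $a\mapsto av$ is in fact an isomorphism $A\to V$. If $x\in\End(V)$ commutes with every element of $A$, then $x$ is determined by $xv$; writing $xv=bv$ for the unique $b\in A$ with this property, one gets $x(av)=a(xv)=(ba)v=b(av)$ for all $a\in A$, and since $Av=V$ this yields $x=b\in A$. Hence $A$ is its own centralizer in $\End(V)$, which makes it a maximal commutative subalgebra, of dimension $\dim V$ as already noted.

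For the statements about eigenspaces I would pass to the decomposition $A=\prod_i A_i$ of the Artinian algebra $A$ into local factors, with idempotents $e_i$, maximal ideals $\mathfrak m_i$, and residue fields $\C$. Since $\C$ is algebraically closed, the closed points of $\mathsf{Spec}(A)$, the maximal ideals, and the characters $\mc E_i\colon A\to\C$ all correspond to these factors; and a $\mc B$-(generalized) eigenspace that is nonzero coincides with the $A$-(generalized) eigenspace for the induced character. Transporting through $V\cong A$, a direct check using the idempotents shows that the generalized eigenspace attached to $\mc E_i$ is exactly the local summand $A_i=e_iA$, which is cyclic over $A$ (generated by $e_i$); this proves the cyclicity of generalized eigenspaces. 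The eigenspace attached to $\mc E_i$ is the annihilator of $\mathfrak m_i$ in $A_i$, i.e.\ the socle of the local factor.

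The main point is then to invoke the Frobenius property: a finite product of algebras is Frobenius precisely when each factor is, so every $A_i$ is a local Frobenius algebra, and a local Frobenius algebra has a one-dimensional socle (Section~\ref{sec frob-alg}). This gives a nonzero, one-dimensional eigenspace for each factor $A_i$, establishing both the one-dimensionality of eigenspaces and the bijection with $\mathsf{Specm}(\mc B(V))$. I expect the only genuine obstacle to be this last input -- the one-dimensional socle of a local Frobenius algebra -- together with the bookkeeping needed to match the intersection-of-unions definition of the generalized eigenspace with the local summand $A_i$; everything else is formal once $V\cong A$ is in hand.
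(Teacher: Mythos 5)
Your proposal is correct, and it is organized somewhat differently from the paper's proof, so a comparison is worthwhile. The identification $V\cong A$ with the regular representation (via faithfulness plus cyclicity), the resulting maximality and dimension count, and the treatment of generalized eigenspaces via the Artinian decomposition $A=\prod_i A_i$ into local factors all match the paper, which cites \cite[Theorems~8.6 and~8.7]{AM69} for the latter; your centralizer computation for maximality is in fact more explicit than the paper's. The genuine divergence is in how the one-dimensionality of eigenspaces is extracted from the Frobenius property. The paper works globally in the coregular representation: if $\psi\in\mathfrak B^*$ is an eigenvector with eigenvalue $\xi_\psi$, then evaluating $a\psi$ at $1$ gives $\psi(a)=\xi_\psi(a)\psi(1)$, so $\psi$ is proportional to the character $\xi_\psi$; no local decomposition is needed for this step. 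You instead localize first and invoke the one-dimensionality of the socle of a local Frobenius algebra. That input is correct, but be aware it is \emph{not} stated in Section~\ref{sec frob-alg} of the paper, so you should supply the one-line argument: $A_i\cong A_i^*$ as $A_i$-modules, and the socle of $A_i^*$ is $\{f\colon f|_{\mathfrak m_i}=0\}=(A_i/\mathfrak m_i)^*\cong\C$ --- which is really the paper's coregular computation performed factor by factor. The paper's route is slightly more economical (it decouples the eigenspace statement from the local decomposition, which it only needs for the generalized eigenspaces); yours has the advantage of locating the eigenspace concretely as the socle of each local summand, making the bijection with $\mathsf{Specm}(\mc B(V))$ and the cyclicity of generalized eigenspaces fall out of a single structural picture.
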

This lemma easily follows from general well-known facts about regular and coregular representations of a finite-dimensional unital commutative algebra, see, e.g., \cite[Section~3.3]{MTV09}. We provide a proof of Lemma~\ref{cor intro} in Section~\ref{sec frob-alg}.

Note that we expect that the dimensions of eigenspaces are one from the general philosophy of Bethe ansatz conjecture. The integrability in any sense always asserts that the algebra of Hamiltonians is maximal commutative. And we also expect that the Bethe algebra has geometric nature based on the geometric Langlands correspondence~\cite{Fre07}.

In the case of Gaudin models, it is proved in \cite[Theorem~3.2]{Ryb18} (resp.\ \cite[Corollary~5]{FFR10}) that $\mathscr B_{\g}$ (resp.~$\mathscr B^\mu_{\g}$ with regular $\mu$) acts cyclically on $M^{\sing}$ (resp.~$M$). For generic values of evaluation parameters (in the periodic case or in the case of generic regular $\mu\in \h^*$) the action of Bethe algebra is diagonalizable and we immediately obtain that eigenspaces have dimension one. However, we cannot make such a conclusion for {\it arbitrary} parameters. Indeed, if a linear operator acts cyclically on a vector space then all its eigenspaces have dimension one. But the same result fails if we replace a single operator by a set of commuting linear operators, as the following simple example shows.

\begin{eg}\label{eg counter}Let $\mathscr A=\C[x_1,x_2]/\big\langle x_1^2,x_2^2,x_1x_2\big\rangle$. Consider the regular representation~$\mathscr A$. Then the eigenspace corresponding to the trivial character is spanned by~$x_1$ and~$x_2$ which is two-dimensional.
\end{eg}

We supplement the results of \cite{FFR10} and \cite{Ryb18} with the nondegenerate symmetric bilinear form on~$M$ which makes $\mathscr B_{\g}^\mu(M)$ Frobenius which allows us to use Lemma~\ref{cor intro}. The bilinear form comes from the tensor product of Shapovalov forms on $M$, we show that all elements of Bethe algebra $\mathscr B_\g^\mu(M)$ with $\mu\in\h^*$ are symmetric with respect to this form, see Lemma~\ref{lem inv-form}.

In the rest of the paper, we only deal with Gaudin models. We refer the readers to \cite{IR20} for details about the Bethe algebra of Yangian $\mathrm{Y}(\g)$ (XXX spin chains). We expect the conjectures with proper modifications also hold for XXZ and XYZ spin chains.

\section{Perfect integrability of Gaudin models}\label{sec pre}
\subsection{Feigin--Frenkel center}\label{sec bethe}
In this section, we recall the definition of Feigin--Frenkel center and its properties.

Let $\g$ be a complex simple Lie algebra of rank $r$. Consider the affine Kac--Moody algebra $\widehat \g$,\[\widehat \g=\g\big[t,t^{-1}\big]\oplus \C K,\qquad \g\big[t,t^{-1}\big]=\g\otimes \C\big[t,t^{-1}\big],\]where $\C\big[t,t^{-1}\big]$ is the algebra of Laurent polynomials in~$t$. For $X\in\g$ and $s\in\Z$, we simply write~$X[s]$ for~$X\otimes t^s$. Let $\g_-=\g\otimes t^{-1}\C\big[t^{-1}\big]$ and $\g[t]=\g\otimes \C[t]$.

Let $h^\vee$ be the \emph{dual Coxeter number} of $\g$. Define the module $V_{-h^\vee}(\g)$ as the quotient of $\mathrm{U}(\widehat \g)$ by the left ideal generated by $\g[t]$ and $K+h^\vee$. We call the module $V_{-h^\vee}(\g)$ the \emph{Vaccum module at the critical level over} $\widehat \g$. The vacuum module $V_{-h^\vee}(\g)$ has a vertex algebra structure.

Define the subspace $\mathfrak z(\widehat\g)$ of $V_{-h^\vee}(\g)$ by
\[
\mathfrak z(\widehat \g)=\{v\in V_{-h^\vee}(\g)\,|\, \g[t]v=0\}.
\]
Using the PBW theorem, it is clear that $V_{-h^\vee}(\g)$ is isomorphic to $\mathrm U(\g_-)$ as vector spaces. There is an injective homomorphism from $\mathfrak z(\widehat\g)$ to $\mathrm{U}(\g_-)$. Hence $\mathfrak z(\widehat\g)$ is identified as a commutative subalgebra of $\mathrm{U}(\g_-)$. We call $\mathfrak z(\widehat\g)$ the \emph{Feigin--Frenkel center}.

There is a distinguished element $S_1\in \mathfrak z(\widehat \g)$ given by
\[
S_1=\sum_{a=1}^{\dim \g}X_a[-1]^2,
\]
where $\{X_a\}$ is an orthonormal basis of $\g$ with respect to the Killing form. The element $S_1$ is called the \emph{Segal--Sugawara vector}.

\begin{prop}[\cite{Ryb08}]\label{prop central}
The subalgebra $\mathfrak z(\widehat\g)$ is the centralizer of $S_1$ in $\mathrm{U}(\g_-)$.
\end{prop}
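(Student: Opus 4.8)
The statement is an equality of two subalgebras of $\mathrm U(\g_-)$, and one inclusion is immediate. Since $\zg$ is commutative and $S_1\in\zg$, every element of $\zg$ commutes with $S_1$; writing $\mathrm C(S_1)=\{v\in\mathrm U(\g_-):[v,S_1]=0\}$ for the centralizer, this gives $\zg\subseteq\mathrm C(S_1)$. The entire content is the reverse inclusion $\mathrm C(S_1)\subseteq\zg$: commuting with the single quadratic element $S_1$ should already force an element into the Feigin--Frenkel center, even though a priori $\mathrm C(S_1)$ is defined purely inside $\mathrm U(\g_-)$ whereas $\zg$ is cut out by the external action of $\g[t]$ on the vacuum module.

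My plan is to pass to the associated graded with respect to the PBW filtration. Then $\mathrm{gr}\,\mathrm U(\g_-)=\mathrm S(\g_-)$ inherits the Poisson bracket determined by $\{X[-a],Y[-b]\}=[X,Y][-a-b]$, and the leading symbol of a commutator is the Poisson bracket of the leading symbols. Let $\overline S_1=\sum_a X_a[-1]^2\in\mathrm S(\g_-)$ be the symbol of $S_1$. If $v\in\mathrm C(S_1)$ has symbol $\overline v$, then $[S_1,v]=0$ forces $\{\overline S_1,\overline v\}=0$, so $\mathrm{gr}\,\mathrm C(S_1)$ lies in the Poisson centralizer $\mathrm Z=\{f\in\mathrm S(\g_-):\{\overline S_1,f\}=0\}$. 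Together with the easy inclusion this yields the chain $\mathrm{gr}\,\zg\subseteq\mathrm{gr}\,\mathrm C(S_1)\subseteq\mathrm Z$.

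To close the argument I would invoke the Feigin--Frenkel theorem: $\zg$ is a polynomial algebra freely generated by $r$ Segal--Sugawara vectors $S_1,\dots,S_r$ together with all their images under the translation operator $\pa$ (the derivation $X[-m]\mapsto mX[-m-1]$), and its associated graded $\mathrm{gr}\,\zg$ is the corresponding free Poisson subalgebra generated by the symbols $\pa^{\,n}\overline S_i$. Because $\zg$ is commutative, each generator $\pa^{\,n}\overline S_i$ Poisson-commutes with $\overline S_1$, so $\mathrm{gr}\,\zg\subseteq\mathrm Z$ automatically. The remaining, and decisive, point is the opposite classical inclusion $\mathrm Z\subseteq\mathrm{gr}\,\zg$: no element outside the classical center Poisson-commutes with the single quadratic symbol $\overline S_1$. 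Granting this, the chain above collapses to equalities, so $\mathrm{gr}\,\mathrm C(S_1)=\mathrm{gr}\,\zg$; since $\zg\subseteq\mathrm C(S_1)$ are filtered subspaces with the same associated graded and the filtration is exhaustive and bounded below, they coincide, proving $\mathrm C(S_1)=\zg$.

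I expect the classical inclusion $\mathrm Z\subseteq\mathrm{gr}\,\zg$ to be the main obstacle. It amounts to computing the kernel of the derivation $\{\overline S_1,-\}$ on $\mathrm S(\g_-)$ and showing it is no larger than the classical Feigin--Frenkel center. The way I would attack it is to use the explicit quadratic form of $\overline S_1$ to relate $\{\overline S_1,-\}$ to the $\g[t]$-action and to the Casimir on the adjoint representation (whose eigenvalue for the Killing-orthonormal basis enters precisely the computation making $S_1$ central at the critical level), and then to control the kernel through the invariant-theoretic description of $\mathrm{gr}\,\zg$ in terms of $^L\g$-opers. Establishing that $\overline S_1$ alone already cuts out the full center --- and nothing more --- is the crux; the noncommutative statement then follows formally from the associated-graded comparison above.
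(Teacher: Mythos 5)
First, note that the paper does not prove this proposition at all: it is imported verbatim from [Ryb08] (Rybnikov, \emph{Uniqueness of higher Gaudin Hamiltonians}), so the only thing to compare your proposal against is Rybnikov's argument. Your outline does in fact mirror the strategy of that paper --- the easy inclusion $\zg\subseteq\mathrm C(S_1)$ from commutativity, the passage to $\mathrm{gr}\,\mathrm U(\g_-)=\mathrm S(\g_-)$ with its Poisson bracket, the chain $\mathrm{gr}\,\zg\subseteq\mathrm{gr}\,\mathrm C(S_1)\subseteq \mathrm Z$, and the standard filtered-to-graded comparison at the end. All of those reductions are correct as stated.

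The problem is that the entire mathematical content of the proposition is concentrated in the one step you explicitly do not prove: the classical inclusion $\mathrm Z\subseteq\mathrm{gr}\,\zg$, i.e., that the Poisson centralizer of the single quadratic symbol $\overline S_1=\sum_a X_a[-1]^2$ in $\mathrm S(\g_-)$ is no larger than the classical Feigin--Frenkel center $\mathrm S(\g_-)^{\g[t]}$. You write ``granting this'' and then describe only a heuristic program (relating $\{\overline S_1,-\}$ to the Casimir and to the oper description of $\mathrm{gr}\,\zg$) rather than an argument. This is not a routine verification: a priori the kernel of the single Poisson derivation $\{\overline S_1,-\}$ could be much larger than the joint kernel of the full $\g[t]$-action that cuts out $\mathrm{gr}\,\zg$, and establishing that it is not is precisely the theorem Rybnikov proves (by a genericity/geometric argument on the Hamiltonian vector field of $\overline S_1$, not by the route you sketch). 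As it stands, your proposal is a correct reduction of the proposition to an unproved claim that is equivalent in difficulty to the proposition itself, so it does not constitute a proof. If the intent is to use this result the way the paper does, the honest options are either to cite [Ryb08] for it, or to actually carry out the computation of $\ker\{\overline S_1,-\}$ on $\mathrm S(\g_-)$.
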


Let $e_{1},\dots,e_{r}$, $h_1,\dots,h_r$, $f_1,\dots,f_r$ be a set of Chevalley generators of $\g$. Let $\varpi\colon \g\to \g$ be the \textit{Cartan anti-involution} sending $e_{1},\dots,e_{r}$, $h_1,\dots,h_r$, $f_1,\dots,f_r$ to $f_{1},\dots,f_{r}$, $h_1,\dots,h_r$, $e_1,\dots,e_r$, respectively. Let $\widehat \varpi$ be the anti-involution on $\widehat\g$ defined by
\[
\widehat\varpi\colon \ \widehat\g\to \widehat \g,\qquad X[s]\mapsto \varpi(X)[s],
\]
for all $X\in \g$ and $s\in \Z$. We also call $\widehat\varpi$ the Cartan anti-involution.

\begin{cor}\label{cor invariant}
The Feigin--Frenkel center $\mathfrak z(\widehat\g )$ is invariant under $\widehat \varpi$.
\end{cor}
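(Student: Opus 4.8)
The plan is to deduce the statement from the characterization of $\mathfrak z(\widehat\g)$ as the centralizer of the Segal--Sugawara vector $S_1$ in $\mathrm U(\g_-)$ (Proposition~\ref{prop central}), together with two observations about the anti-involution $\widehat\varpi$: that it preserves $\mathrm U(\g_-)$ and that it fixes $S_1$. The first is immediate, since $\widehat\varpi(X[s])=\varpi(X)[s]$ with $\varpi(X)\in\g$, so $\widehat\varpi$ maps $\g_-=\g\otimes t^{-1}\C[t^{-1}]$ into itself and therefore restricts to an anti-automorphism of $\mathrm U(\g_-)$.

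The heart of the argument is to show $\widehat\varpi(S_1)=S_1$. Because $\widehat\varpi$ reverses products, applying it to a single square gives $\widehat\varpi(X_a[-1]^2)=(\varpi(X_a)[-1])^2$, so $\widehat\varpi(S_1)=\sum_a(\varpi(X_a)[-1])^2$. I would then argue that $\{\varpi(X_a)\}$ is again an orthonormal basis of $\g$ and that $S_1$ does not depend on the choice of orthonormal basis; together these give $\widehat\varpi(S_1)=S_1$. Independence of the basis follows from a direct computation: if $Y_b=\sum_a O_{ba}X_a$ with $O$ orthogonal, then $\sum_b Y_b[-1]^2=\sum_{a,c}\big(\sum_b O_{ba}O_{bc}\big)X_a[-1]X_c[-1]=\sum_a X_a[-1]^2$. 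To see that $\varpi$ sends an orthonormal basis to an orthonormal basis, I would check that $\varpi$ is an isometry of the Killing form $\kappa$. For this, let $\theta$ denote the Chevalley involution ($e_i\mapsto -f_i$, $h_i\mapsto -h_i$, $f_i\mapsto -e_i$), which is a Lie algebra automorphism and hence preserves $\kappa$. Comparing values on the Chevalley generators shows $\varpi=-\theta$, whence $\kappa(\varpi X,\varpi Y)=\kappa(\theta X,\theta Y)=\kappa(X,Y)$, the two signs cancelling. Consequently $\kappa(\varpi X_a,\varpi X_b)=\kappa(X_a,X_b)=\delta_{ab}$, so $\{\varpi(X_a)\}$ is orthonormal.

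With $\widehat\varpi(S_1)=S_1$ in hand, the conclusion is formal. Let $z\in\mathfrak z(\widehat\g)$; by Proposition~\ref{prop central} this means $z\in\mathrm U(\g_-)$ and $[z,S_1]=0$. Applying the anti-automorphism $\widehat\varpi$ to $zS_1=S_1z$ and using $\widehat\varpi(S_1)=S_1$ yields $S_1\,\widehat\varpi(z)=\widehat\varpi(z)\,S_1$, so $\widehat\varpi(z)\in\mathrm U(\g_-)$ centralizes $S_1$ and therefore lies in $\mathfrak z(\widehat\g)$, again by Proposition~\ref{prop central}. Hence $\widehat\varpi(\mathfrak z(\widehat\g))\subseteq\mathfrak z(\widehat\g)$, and since $\widehat\varpi$ is an involution this inclusion is an equality.

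I expect the identity $\widehat\varpi(S_1)=S_1$ to be the only real obstacle: the passage through the centralizer property is routine, but this step requires both the mild care that a square is insensitive to the order-reversal built into $\widehat\varpi$ and the genuine input that the Cartan anti-involution is an isometry of the Killing form.
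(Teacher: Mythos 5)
Your proposal is correct and follows essentially the same route as the paper: both deduce the corollary from Proposition~\ref{prop central} together with the identity $\widehat\varpi(S_1)=S_1$, which the paper simply asserts and you verify in detail (via $\varpi=-\theta$ being an isometry of the Killing form and the basis-independence of $S_1$). Your elaboration is sound; it is the same argument spelled out.
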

\begin{proof}
Since by Proposition \ref{prop central}, $\mathfrak z(\widehat\g )$ is the centralizer of $S_1$ in $\mathrm{U}(\g_-)$, the statement follows from the fact that $\widehat\varpi(S_1)=S_1$.
\end{proof}

\subsection{Affine Harish-Chandra homomorphism}
Let $\n_+$ be the nilpotent Lie subalgebra generated by $e_{1}, \dots, e_{r}$. Let $\n_-$ be the nilpotent Lie subalgebra generated by $f_{1},\dots,f_{r}$. Let~$\h$ be the Cartan subalgebra generated by $h_{1},\dots,h_{r}$. One has the triangular decomposition $\g=\n_-\oplus \h\oplus \n_+$.

The Lie algebra $\g$ is considered as a subalgebra of $\widehat\g$ via identifying $X\in\g$ with $X[0]\in \widehat \g$. The Lie subalgebra $\h$ acts on $\widehat\g$ adjointly and hence acts on $\mathrm U(\g_-)$. Let $\mathrm U(\g_-)^\h$ be the centralizer of~$\h$ in~$\mathrm U(\g_-)$.

Let $J$ be the intersection of $\mathrm U(\g_-)^\h$ and the left ideal of $\mathrm U(\g_-)$ generated by $t^{-1}\n_-[t^{-1}]$. Then we have the direct sum of vector spaces,
\begin{gather}\label{dec centralizer}
\mathrm U(\g_-)^\h=\mathrm{U}\big(t^{-1}\h\big[t^{-1}\big]\big)\oplus J.
\end{gather}
Hence we have the projection
\[
\mathfrak f\colon \ \mathrm U(\g_-)^\h\to \mathrm{U}\big(t^{-1}\h\big[t^{-1}\big]\big).
\]It is clear that $\mathfrak f$ is a homomorphism of algebras. We call $\mathfrak f$ the {\it affine Harish-Chandra homomorphism}. We use the same letter $\mathfrak f$ for the restriction map $\mathfrak f\colon \mathfrak z(\widehat \g)\to \mathrm{U}\big(t^{-1}\h\big[t^{-1}\big]\big)$.

The following proposition is a part of \cite[Theorem~8.1.5]{Fre07}.
\begin{prop}\label{prop injectivity}
The homomorphism $\mathfrak f\colon \mathfrak z(\widehat \g)\to \mathrm{U}\big(t^{-1}\h\big[t^{-1}\big]\big)$ is injective.
\end{prop}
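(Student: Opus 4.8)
The plan is to reduce the injectivity to a classical (commutative) statement by passing to the associated graded for the Poincaré--Birkhoff--Witt (PBW) filtration, and then to prove that classical statement by a density argument on jet spaces. Filter $\mathrm U(\g_-)$ by the PBW filtration, so that $\mathrm{gr}\,\mathrm U(\g_-)=S(\g_-)=\C[\g[t]]$, where $(\g_-)^*\cong\g[t]$ comes from the Killing form together with the residue pairing on Laurent polynomials. Everything is graded by conformal weight, $\deg X[-n]=n$, with finite-dimensional homogeneous components, so all maps below are graded and injectivity may be tested weight by weight. Since $\h$ acts semisimply and the filtration is $\h$-stable, $\mathrm{gr}\,\mathrm U(\g_-)^\h=S(\g_-)^\h$. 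Moreover the decomposition \eqref{dec centralizer} is compatible with the filtration: in a PBW basis adapted to $\g=\n_-\oplus\h\oplus\n_+$, every weight-zero monomial is either purely Cartan or contains a factor from $t^{-1}\n_-[t^{-1}]$, whence $S(\g_-)^\h=S(t^{-1}\h[t^{-1}])\oplus\bar J$ with $\bar J=\mathrm{gr}\,J$. Therefore $\mathfrak f$ is filtered and its symbol is the classical map $\overline{\mathfrak f}\colon S(\g_-)^\h\to S(t^{-1}\h[t^{-1}])$ restricting a weight-zero function on $\g[t]$ to the Cartan jets $\h[t]$.

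Next I would record the elementary inclusion $\mathrm{gr}\,\zg\subseteq S(\g_-)^{\g[t]}$. The defining condition $\g[t]\,z=0$ in the vacuum module $V_{-h^\vee}(\g)\cong\mathrm U(\g_-)$ says that $z$ is annihilated by all positive modes; in the PBW limit the action of these modes is the coadjoint action of $\g[t]$ on $S(\g_-)=\C[\g[t]]$ (the critical-level cocycle contributes only lower-order scalar terms and drops out), so the symbol $\bar z$ is $\g[t]$-invariant. Granting this, the reduction is immediate: if $0\neq z\in\ker\mathfrak f$ has leading symbol $\bar z\neq0$, then $\bar z\in S(\g_-)^{\g[t]}$ and, since $\mathfrak f$ is filtered with $\mathfrak f(z)=0$, one gets $\overline{\mathfrak f}(\bar z)=0$. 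Thus it suffices to prove that $\overline{\mathfrak f}$ is injective on $S(\g_-)^{\g[t]}$.

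For this classical statement I would argue by density. Let $P\in S(\g_-)^{\g[t]}$ restrict to zero on $\h[t]$; viewing $P$ as a polynomial function on the jets $X(t)=X_0+X_1t+\cdots\in\g[t]$, it is invariant under the adjoint action of the connected group $G[t]$ (into which the infinitesimal $\g[t]$-invariance integrates) and vanishes on $\h[t]$. The jets with $X_0\in\g$ regular semisimple form a dense open subset of each finite truncation, and for such a jet there is $g(t)\in G[t]$ with $\mathrm{Ad}(g(t))X(t)\in\h[t]$: one conjugates $X_0$ into $\h$ and then removes the root-space components order by order, which is possible because $\mathrm{ad}(X_0)$ is invertible on each root space (a finite Newton/Hensel iteration). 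Hence $P(X(t))=P(\mathrm{Ad}(g(t))X(t))=0$ on a dense set, so $P\equiv0$ and $\overline{\mathfrak f}$ is injective.

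The step I expect to be the main obstacle is this jet-level gauge transformation -- conjugating a generic jet into a Cartan-valued one by $G[t]$. Its finite-dimensional shadow (regular semisimple elements are conjugate into $\h$ and are dense) is classical, but the inductive removal of the off-Cartan components and the careful passage between the Lie-algebra and group actions on truncated jet spaces are where the real work lies. The surrounding reduction -- filtration compatibility of \eqref{dec centralizer} and the inclusion $\mathrm{gr}\,\zg\subseteq S(\g_-)^{\g[t]}$ -- is standard but must be checked, since it is exactly what allows one to trade the enveloping algebra for its far more tractable commutative symbol algebra.
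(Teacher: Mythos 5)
Your argument is correct, but it takes a genuinely different route from the paper: the paper does not prove this proposition at all, it simply quotes it as part of \cite[Theorem~8.1.5]{Fre07}, where the injectivity of $\mathfrak f$ is obtained from the Wakimoto/Miura realization of the center at the critical level (and comes packaged with a description of the image as the intersection of kernels of screening operators, i.e., with the identification of $\zg$ with functions on opers). Your proof instead degenerates to the PBW symbol and reduces everything to a classical statement about jets. The reduction is sound: the weight-zero decomposition $S(\g_-)^\h=S\big(t^{-1}\h\big[t^{-1}\big]\big)\oplus\bar J$ holds because a weight-zero monomial with a root-vector factor must contain both positive and negative root factors, so the symbol of $\mathfrak f$ is indeed restriction of functions to the Cartan jets; and you correctly use only the easy inclusion $\mathrm{gr}\,\zg\subseteq S(\g_-)^{\g[t]}$ (the equality is the hard Beilinson--Drinfeld/Eisenbud--Frenkel theorem, which you do not need). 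The final density step -- conjugating a jet with regular semisimple constant term into $\h[t]$ by killing the root components order by order via $\exp\big(t^k\,\mathrm{ad}(Y)\big)$ with $\mathrm{ad}(X_0)$ invertible on root spaces -- is the standard jet-level Chevalley restriction argument and works on each finite truncation, where the $\g[t]$-invariance integrates to the connected jet group. What your approach buys is a self-contained, elementary proof avoiding Wakimoto modules and opers entirely; what it loses relative to \cite{Fre07} is any description of the image of $\mathfrak f$, which the paper does not need but which is the main content of the cited theorem.
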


Using Proposition \ref{prop injectivity}, we improve Corollary~\ref{cor invariant} to the following proposition.
\begin{prop}\label{prop fix B}
For any element $S\in \mathfrak z(\widehat\g )$, we have $\widehat \varpi(S)=S$.
\end{prop}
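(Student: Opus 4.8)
The plan is to deduce the fixed-point property from the injectivity of the affine Harish-Chandra homomorphism. By Corollary~\ref{cor invariant} the element $\widehat\varpi(S)$ again lies in $\mathfrak z(\widehat\g)$, and since $\mathfrak z(\widehat\g)\subseteq \mathrm U(\g_-)^\h$ (which is what makes the restriction $\mathfrak f\colon\mathfrak z(\widehat\g)\to\mathrm{U}(t^{-1}\h[t^{-1}])$ meaningful), both $S$ and $\widehat\varpi(S)$ lie in the domain of $\mathfrak f$. As $\mathfrak f$ is injective on $\mathfrak z(\widehat\g)$ by Proposition~\ref{prop injectivity}, it suffices to prove the single identity $\mathfrak f(\widehat\varpi(S))=\mathfrak f(S)$, which then forces $\widehat\varpi(S)=S$. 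In fact I would prove the stronger operator identity $\mathfrak f\circ\widehat\varpi=\mathfrak f$ on all of $\mathrm U(\g_-)^\h$.

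To obtain $\mathfrak f\circ\widehat\varpi=\mathfrak f$, I would analyze how $\widehat\varpi$ interacts with the decomposition~\eqref{dec centralizer}. Two points are immediate. First, $\widehat\varpi$ acts as the identity on $\mathrm{U}(t^{-1}\h[t^{-1}])$: it fixes each generator $h_i[s]$ since $\varpi(h_i)=h_i$, and as an anti-automorphism it merely reverses products of the pairwise commuting generators $h_i[s]$, leaving them unchanged. Second, $\widehat\varpi$ preserves the weight-zero space $\mathrm U(\g_-)^\h$: because $\varpi$ fixes $\h$ pointwise and is an anti-involution, it sends an $\h$-weight vector of weight $\alpha$ to one of weight $-\alpha$, hence stabilizes the weight-zero subspace. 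Granting moreover that $\widehat\varpi(J)=J$, any $u\in\mathrm U(\g_-)^\h$ written as $u=\mathfrak f(u)+j$ with $j\in J$ satisfies $\widehat\varpi(u)=\mathfrak f(u)+\widehat\varpi(j)$ with $\widehat\varpi(j)\in J$, and projecting along~\eqref{dec centralizer} gives $\mathfrak f(\widehat\varpi(u))=\mathfrak f(u)$.

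The main obstacle is therefore the invariance $\widehat\varpi(J)=J$. The difficulty is that $\widehat\varpi$ is an anti-automorphism carrying $\n_-$ to $\n_+$, so it sends the \emph{left} ideal generated by $t^{-1}\n_-[t^{-1}]$ to the \emph{right} ideal generated by $t^{-1}\n_+[t^{-1}]$; a priori these define different complements of $\mathrm{U}(t^{-1}\h[t^{-1}])$ in $\mathrm U(\g_-)^\h$, and it is not clear that $\widehat\varpi(J)$ coincides with $J$. I would settle this by a PBW computation. Fix a PBW basis of $\mathrm U(\g_-)$ consisting of ordered monomials $E\,H\,F$, where $E$ is a product of positive root generators $e_\beta[s]$, $H$ a product of Cartan generators $h_i[s]$, and $F$ a product of negative root generators $f_\beta[s]$, all with $s\le-1$ and with the $F$-block placed rightmost. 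The left ideal generated by $t^{-1}\n_-[t^{-1}]$ is then spanned by the monomials with $F\ne1$; intersecting with $\mathrm U(\g_-)^\h$ and comparing with~\eqref{dec centralizer} identifies $J$ with the span of the weight-zero monomials having $F\ne1$. Since a monomial $E\,H\,F$ has weight $\mathrm{wt}(E)+\mathrm{wt}(F)$ with $\mathrm{wt}(E)\ge0\ge\mathrm{wt}(F)$, weight zero forces either $E=F=1$ or $E\ne1$ and $F\ne1$ simultaneously; hence $J$ is spanned exactly by the weight-zero monomials having a nontrivial $E$-part and a nontrivial $F$-part. This description is manifestly symmetric under exchanging the roles of $E$ and $F$, which is precisely what $\widehat\varpi$ does up to the harmless reordering within each block, so $\widehat\varpi(J)=J$. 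Equivalently, this shows that the intersection of $\mathrm U(\g_-)^\h$ with the left ideal generated by $t^{-1}\n_-[t^{-1}]$ agrees with its intersection with the right ideal generated by $t^{-1}\n_+[t^{-1}]$, which completes the argument.
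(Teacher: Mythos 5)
Your argument is correct and follows essentially the same route as the paper's proof: decompose along \eqref{dec centralizer}, show that $\mathfrak f\circ\widehat\varpi=\mathfrak f$, and conclude from Corollary~\ref{cor invariant} together with the injectivity of $\mathfrak f$ on $\mathfrak z(\widehat\g)$ given by Proposition~\ref{prop injectivity}. The only difference is that you supply a sound PBW-basis verification of the key fact $\widehat\varpi(J)=J$ (equivalently, that $J$ is also the intersection of $\mathrm U(\g_-)^\h$ with the right ideal generated by $t^{-1}\n_+\big[t^{-1}\big]$), a point the paper asserts without detailed justification.
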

The proposition was proved in \cite[Proposition~8.4]{MTV06} for type A and in \cite[Proposition~6.1]{Lu18} for types B and C.
\begin{proof}
Now take $S\in \zg$ and write the decomposition of $S$ as in \eqref{dec centralizer}, $S=S_{\h}+S_{\mathfrak j}$, where $S_{\h}\in \mathrm{U}\big(t^{-1}\h\big[t^{-1}\big]\big)$ and $S_{\mathfrak j}\in J$. Then $\widehat\varpi(S)=\widehat\varpi(S_{\h})+\widehat\varpi(S_{\mathfrak j})$. Note that $\widehat\varpi$ fix elements in $\mathrm{U}\big(t^{-1}\h\big[t^{-1}\big]\big)$ and $S_{\h}\in \mathrm{U}\big(t^{-1}\h\big[t^{-1}\big]\big)$ we have $\widehat\varpi(S_{\h})=S_{\h}$. Note also that $\widehat\varpi$ maps $\mathrm{U}\big(t^{-1}\n_+\big[t^{-1}\big]\big)$ to $\mathrm{U}\big(t^{-1}\n_-\big[t^{-1}\big]\big)$ and $\mathrm{U}\big(t^{-1}\n_-\big[t^{-1}\big]\big)$ to $\mathrm{U}\big(t^{-1}\n_+\big[t^{-1}\big]\big)$, we have $\widehat\varpi(S_{\mathfrak j})\in J$ since $J$ is the intersection of the $\h$-centralizer $\mathrm{U}(\g_-)^\h$ with the left ideal of $\mathrm U(\g_-)$ generated by $t^{-1}\n_-\big[t^{-1}\big]$ and also the intersection of $\mathrm U(\g_-)^\h$ with the right ideal of $\mathrm U(\g_-)$ generated by $t^{-1}\n_+\big[t^{-1}\big]$. It follows that
\[ \mathfrak f(S)=S_\h=\mathfrak f\circ \widehat\varpi(S).\]
Note that by Corollary \ref{cor invariant} both $S$ and $\widehat\varpi (S)$ are elements in $\zg$. Since by Proposition~\ref{prop injectivity} the homomorphism $\mathfrak f\colon \mathfrak z(\widehat \g)\to \mathrm{U}\big(t^{-1}\h\big[t^{-1}\big]\big)$ is injective, we conclude that $S=\widehat\varpi(S)$, completing the proof.
\end{proof}

\subsection{Gaudin models}
We recall the construction of Gaudin models from, e.g.,~\cite{Ryb06,Ryb18}. The coproduct of $\mathrm{U}(\g_-)$ is given by
\[
\Delta\colon \ X[s]\mapsto X[s]\otimes 1+ 1\otimes X[s], \qquad X\in \g,\qquad s <0.
\]
Let $\ell$ be a positive integer. Using the iterated coproduct, one has the homomorphism
\[
\mathrm{U}(\g_-)\to \mathrm{U}(\g_-)^{\otimes (\ell+1)}.
\]
For any $z\in\C^\times$, one gets the \emph{evaluation map} at $z$
\[
\varphi_{z}\colon \ \mathrm U(\g_-)\to \mathrm{U}(\g), \qquad X[s]\mapsto z^s X.
\]
For any $\mu\in\g^*$, one obtains the character
\[
\psi_{\mu}\colon \ \mathrm U(\g_-)\to \C,\qquad X[s]\mapsto \delta_{s,-1}\mu(X),
\]

Fix a sequence of pairwise distinct nonzero complex numbers $\bm z=(z_1,\dots,z_\ell)$. Then using these three homomorphisms, one obtains a new homomorphism
\[
\varphi_{\bm z,\mu}\colon \ \mathrm U(\g_-)\to \mathrm{U}(\g)^{\otimes \ell},\qquad \varphi_{\bm z,\mu}(X[s])=\sum_{a=1}^\ell z_a^s (X)_a+\delta_{s,-1}\mu(X),
\]
where $(X)_a=1^{\otimes (a-1)}\otimes X\otimes 1^{\otimes (\ell -a)}$.

Set $u-\bm{z}=(u-z_1,\dots,u-z_\ell)$. Define the \emph{Gaudin algebra} as a subalgebra generated by elements in $\varphi_{u-\bm z,\mu}(\mathfrak z(\widehat\g))\subset \mathrm{U}(\g)^{\otimes \ell}$ for all $u\in \C\setminus\{z_1,\dots,z_\ell\}$. The Gaudin algebra is commutative and it is denoted by $\mathcal A_{\bm z,\mu}$. When $\mu=0$, the Gaudin algebra commutes with the diagonal action of $\mathrm{U}(\g)$, see, e.g., \cite[Proposition~3]{Ryb06}.

Let $\bla=(\la_1,\dots,\la_\ell)$ be a sequence of dominant integral weights. Denote by $V_{\la_i}$ the finite-dimensional irreducible $\g$-module of highest weight $\la_i$. We set $V_{\bla}=\otimes_{i=1}^\ell V_{\la_i}$ and
\begin{gather*}
(V_{\bla})^\sing=\{v\in V_{\bla}\,|\, \n_+ v=0\},\qquad \mathcal M_{\bla,\mu}= \begin{cases}(V_{\bla})^\sing,& \text{if }\mu=0,\\ V_{\bla}, & \text{if }\mu\in\h^* \text{ is regular}.\end{cases}
\end{gather*}
Here we identify $\h^*$ with the subspace of $\g^*$ consisting of all elements annihilating $\n_+\oplus \n_-$. By the construction of $\mathcal A_{\bm z,\mu}$, $\mathcal M_{\bla,\mu}$ is an $\mathcal A_{\bm z,\mu}$-module.

Let $\mathcal A_{\bm z,\mu}$ be the algebra of Hamiltonians and $\mathcal M_{\bla,\mu}$ the spin chain. We call the corresponding integrable system the \emph{Gaudin model}. We say that the Gaudin model has \emph{periodic boundary condition} if $\mu=0$ and \emph{regular quasi-periodic boundary condition} if $\mu\in\h^*$ is regular. We would like to study the spectrum of $\mathcal A_{\bm z,\mu}$ acting on $\mathcal M_{\bla,\mu}$.

The following theorem is obtained in \cite[Corollary~5]{FFR10} for any regular $\mu\in\g^*$ and in \cite[Theorem~3.2]{Ryb18} for $\mu =0$.
\begin{thm}\label{thm cyclicity}
If $\mu\in \h^*$ is regular or if $\mu =0$, then the space $\mathcal M_{\bla,\mu}$ is cyclic as an $\mathcal A_{\bm z,\mu}$-module.
\end{thm}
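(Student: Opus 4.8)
The result is proved in \cite[Theorem~3.2]{Ryb18} for the periodic case and \cite[Corollary~5]{FFR10} for the regular quasi-periodic case; the route I would take is the following. First I would fix candidate cyclic vectors and reduce the two boundary conditions. When $\mu\in\h^*$ is regular the twist destroys the commutation of $\mathcal A_{\bm z,\mu}$ with $\h$, so the weight grading is broken and one must work with the whole module $V_{\bla}$; the tensor product of highest weight vectors $v^+=v_{\la_1}\otimes\cdots\otimes v_{\la_\ell}$ is then the natural candidate. When $\mu=0$ the algebra $\mathcal A_{\bm z,0}$ commutes with the diagonal $\mathrm U(\g)$ and hence preserves the isotypic decomposition of $(V_{\bla})^\sing$ into multiplicity spaces $\mathrm{Hom}_\g(V_\la,V_{\bla})$, so cyclicity of $(V_{\bla})^\sing$ becomes equivalent to cyclicity of $\mathcal A_{\bm z,0}$ on each multiplicity space separately, with a global cyclic vector assembled from cyclic vectors of the summands.

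As a warm-up and to locate the cyclic vector, I would treat generic spectral parameters $\bm z$ through the completeness of the Bethe ansatz: there the action is diagonalizable with simple joint spectrum and an eigenbasis of Bethe vectors. For a semisimple commutative action with one-dimensional joint eigenspaces, a vector is cyclic exactly when its projection onto every eigenline is nonzero, and I would check this using the Shapovalov (contravariant) form together with the explicit structure of the Bethe vectors, which are produced by applying lowering operators to the highest weight vector. In the quasi-periodic case it is precisely the breaking of the weight grading that lets $v^+$ meet every eigenline.

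The essential difficulty, illustrated by Example~\ref{eg counter}, is that for a commuting family the dimension of the cyclic span $\mathcal A_{\bm z,\mu}\cdot v$ is only lower semicontinuous in the parameters, so full-dimensionality at generic $\bm z$ need not persist at special configurations, and a naive specialization from the generic case fails. To bypass this I would argue at each fixed parameter through associated graded objects: equip $\mathcal A_{\bm z,\mu}$ and $\mathcal M_{\bla,\mu}$ with compatible filtrations and identify $\mathrm{gr}\,\mathcal A_{\bm z,\mu}$ with the classical Poisson-commutative Gaudin algebra (a nontrivial input). Cyclicity then lifts in the favorable direction --- if the symbols of a family $a_iv$ span $\mathrm{gr}\,\mathcal M_{\bla,\mu}$ then the $a_iv$ themselves span $\mathcal M_{\bla,\mu}$ --- and this implication involves no limit in the parameters. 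The crux, which I expect to be the main obstacle, is the computation of $\mathrm{gr}\,\mathcal A_{\bm z,\mu}$ and the proof of cyclicity at the classical level; equivalently one must control the joint spectrum through the space of $^L\g$-opers with regular singularities at $z_1,\dots,z_\ell$ in the periodic case, supplemented by an irregular singularity at infinity governed by the regular $\mu$ in the quasi-periodic case. This is exactly the analysis carried out via the Feigin--Frenkel isomorphism and Wakimoto modules in the cited works.
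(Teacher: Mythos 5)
The paper offers no proof of this theorem: it is imported wholesale from \cite[Corollary~5]{FFR10} (regular quasi-periodic case) and \cite[Theorem~3.2]{Ryb18} (periodic case), which is exactly the attribution in your first sentence, so at the level of what the paper actually does you are finished there. The remainder of your text is a reconstruction of the internals of those references, and since you offer it, two points deserve flagging. First, your reduction in the periodic case --- ``cyclicity of $(V_{\bla})^\sing$ becomes equivalent to cyclicity on each multiplicity space separately'' --- is not an equivalence for a general commutative algebra: a direct sum of cyclic modules need not be cyclic (take $\C$ acting by scalars on $\C\oplus\C$). It works here only because $\mathcal A_{\bm z,0}$ contains the diagonally embedded center of $\mathrm U(\g)$ (the residue of the oper at $\infty$), which takes distinct central characters on distinct isotypic components and therefore lets one assemble a global cyclic vector; this input must be stated. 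Second, your diagnosis of the obstruction (semicontinuity of $\dim \mathcal A_{\bm z,\mu}\cdot v$ in the parameters, so generic cyclicity does not specialize) is exactly right, but the remedy in \cite{Ryb18} is not an associated-graded/classical-limit argument --- it is unclear what ``cyclicity at the classical level'' would even mean for the Poisson-commutative limit acting on the finite-dimensional space $V_{\bla}$. Rybnikov instead extends the family of Gaudin subalgebras to the Deligne--Mumford compactification $\overline{M_{0,\ell+1}}$, observes that the non-cyclic locus is closed and invariant under the relevant torus action, hence if nonempty must contain a fixed (caterpillar) point, and verifies cyclicity there directly by an inductive argument that uses the \cite{FFR10} result as input. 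So your proposal matches the paper where the paper has content, but the speculative part would need the repairs above to become a proof of the cited theorems.
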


\subsection{Bethe algebra}
Note that our definition of Gaudin models is slightly different from that in Introduction. In this section, we define the Bethe algebra in $\mathrm U(\g[t])$ and clarify this point.

Following, e.g., \cite[Section 5]{Mol12}, we recall the definition of Bethe algebra using Feigin--Frenkel center $\zg$. Note that the Feigin--Frenkel center $\zg$ is in $\mathrm U(\g_-)$ while the Bethe algebra is in $\mathrm U(\g[t])$.

For $\mathcal X\in \g$ and $\mu\in\g^*$, define the current $\mathcal X^{\mu}(u)$ by
\[
\mathcal X^{\mu}(u)=\mu(\mathcal X)+\sum_{r\gge 0}\mathcal X[r]u^{-r-1}\in \mathrm U(\g[t])\big[\big[u^{-1}\big]\big].
\]
For any element $a$ of the form
\[
a=\sum \mathcal X_{1}[-s_1-1]\mathcal X_{2}[-s_2-1]\cdots \mathcal X_{k}[-s_{k}-1]\in \zg
\]
for some $k\in\Z_{>0}$, $\mathcal X_{i}\in\g$, $s_i\in\Z_{\gge 0}$, define a series in $u^{-1}$ whose coefficients are in $\mathrm U(\g[t])$ by
\begin{gather}\label{eq:st}
\sum \frac{(-1)^k}{s_1!s_2!\cdots s_k!}\pa_u^{s_1}\mathcal X_1^{\mu}(u)\pa_u^{s_2}\mathcal X_2^{\mu}(u)\cdots \pa_u^{s_k}\mathcal X_k^{\mu}(u)\in \mathrm U(\g[t])\big[\big[u^{-1}\big]\big].
\end{gather}
The \emph{Bethe algebra} $\mathscr B^\mu_\g$ is the subalgebra of $\mathrm U(\g[t])$ generated by the coefficients of all such series of form~\eqref{eq:st} as~$a$ runs over~$\zg$.

The Bethe algebra $\mathscr B^\mu_\g$ (or Feigin--Frenkel center $\zg$) is considered as the universal Gaudin algebra, see, e.g.,~\cite{IR20}. The Gaudin algebra $\mathcal A_{\bm z,\mu}$ in $\mathrm U(\g)^{\otimes \ell}$ can also be obtained from $\mathscr B^\mu_\g$ by taking the $(\ell-1)$-th fold coproduct and then applying to the $i$-th factor the evaluation map at $z_i$ for every $1\lle i\lle \ell$. In particular, the image of the Gaudin algebra $\mathcal A_{\bm z,\mu}$ acting on $V_{\bla}$ coincides with that of Bethe algebra $\mathscr B_\g^\mu$ acting on tensor product of evaluation modules $V_{\bla}$ with evaluation points at $\bm z=(z_1,\dots,z_\ell)$, see \cite[Propositions 2.3 and 2.5]{Ryb18} or \cite{FFR94,FFT10,Ryb06}. Note that in this case, all finite-dimensional irreducible $\mathrm U(\g[t])$-modules are tensor products of evaluation modules with pairwise distinct evaluation parameters.

\subsection{Shapovalov form}
For a dominant integral weight $\la$, there is a unique nondegenerate symmetric bilinear form $S_{\la}$ on $V_{\la}$ such that
\[
\mc S_{\la}(v_\la,v_\la)=1,\qquad \mc S_{\la}(Xv,w)=\mc S_{\la}(v,\varpi(X)w),
\]
where $v_{\la}$ is a highest weight vector of $V_\la$ and $v,w\in V_\la$. We call $\mc S_\la$ the \emph{Shapovalov form on $V_\la$}. The Shapovalov form $\mc S_\la$ is positive definite on the real part of $V_\la$.

The Shapovalov forms $\mc S_{\la_i}$ induce a nondegenerate symmetric bilinear form $\mc S_{\bla}=\otimes_{i=1}^\ell \mc S_{\la_i}$ on~$V_{\bla}$. The restriction of $\mathcal S_{\bla}$ on the singular subspace $(V_{\bla})^\sing$ is also nondegenerate.

Suppose $\mu\in\h^*$, then it is clear that
\begin{gather}\label{eq inv-form}
\mathcal S_{\bla}(\varphi_{\bm z,\mu}(X[s])v,w)=\mathcal S_{\bla}(v,\varphi_{\bm z,\mu}(\varpi(X)[s])w)=\mathcal S_{\bla}(v,\varphi_{\bm z,\mu}\circ\widehat{\varpi}(X[s])w),
\end{gather}
for all $v,w\in V_{\bla}$ and $X\in\g$.

Let $\rho_{\bla,\bm z,\mu}\colon \mathcal A_{\bm z,\mu}\to \End(\mathcal M_{\bla,\mu})$ be the representation of the natural action of $\mathcal A_{\bm z,\mu}$ on $\mathcal M_{\bla,\mu}$. Let $\mathfrak A_{\bla,\bm z,\mu}$ be the image of $\mathcal A_{\bm z,\mu}$ under $\rho_{\bla,\bm z,\mu}$.
\begin{lem}\label{lem inv-form}
Let $a\in \mathfrak A_{\bla,\bm z,\mu}$ and $v,w\in \mathcal M_{\bla,\mu}$. If $\mu\in\h^*$, then we have $\mathcal S_{\bla}(av,w)=\mathcal S_{\bla}(v,aw)$.
\end{lem}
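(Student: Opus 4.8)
The plan is to show that each generator of the Gaudin algebra $\mathcal{A}_{\bm z,\mu}$ already acts on $V_{\bla}$ as a symmetric operator for $\mathcal{S}_{\bla}$, and then to promote this to the whole commutative algebra. Recall that $\mathcal{A}_{\bm z,\mu}$ is generated by the elements $\varphi_{u-\bm z,\mu}(S)$ with $S\in\zg$ and $u\in\C\setminus\{z_1,\dots,z_\ell\}$, so the crux is to compute the $\mathcal{S}_{\bla}$-adjoint of such an operator. Write $A^*$ for the adjoint of $A\in\End(V_{\bla})$ with respect to the nondegenerate symmetric form $\mathcal{S}_{\bla}$; then $A\mapsto A^*$ is a $\C$-linear involutive anti-automorphism of $\End(V_{\bla})$.

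First I would upgrade the adjointness relation \eqref{eq inv-form} from the algebra generators $X[s]$ to all of $\mathrm{U}(\g_-)$. Consider the two maps $Y\mapsto \varphi_{u-\bm z,\mu}(Y)^*$ and $Y\mapsto \varphi_{u-\bm z,\mu}(\widehat\varpi(Y))$ from $\mathrm{U}(\g_-)$ to $\End(V_{\bla})$. Both are anti-homomorphisms: the first because $\varphi_{u-\bm z,\mu}$ is an algebra homomorphism and $*$ is an anti-automorphism, the second because $\widehat\varpi$ is an anti-involution of $\mathrm{U}(\g_-)$ and $\varphi_{u-\bm z,\mu}$ is a homomorphism. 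By \eqref{eq inv-form}, which holds verbatim for the evaluation parameters $u-\bm z$ in place of $\bm z$ since only $\mu\in\h^*$ and the tensor structure of $\mathcal{S}_{\bla}$ enter, these two anti-homomorphisms agree on the generators $X[s]$ and on $1$. Hence they coincide on all of $\mathrm{U}(\g_-)$, giving $\varphi_{u-\bm z,\mu}(Y)^*=\varphi_{u-\bm z,\mu}(\widehat\varpi(Y))$ for every $Y\in\mathrm{U}(\g_-)$.

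Specializing $Y=S\in\zg$ and invoking Proposition~\ref{prop fix B}, which gives $\widehat\varpi(S)=S$, I obtain $\varphi_{u-\bm z,\mu}(S)^*=\varphi_{u-\bm z,\mu}(S)$; that is, every generator of $\mathcal{A}_{\bm z,\mu}$ is symmetric on $V_{\bla}$. Although symmetric operators do not form an algebra in general, they do inside a commutative one: if $A$ and $B$ commute and are symmetric then $(AB)^*=B^*A^*=BA=AB$. Since $\mathcal{A}_{\bm z,\mu}$ is commutative and generated by these symmetric operators, every element of $\mathcal{A}_{\bm z,\mu}$ acts symmetrically on $V_{\bla}$. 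Finally, for $a\in\mathfrak{A}_{\bla,\bm z,\mu}$, lift $a$ to $\widetilde a\in\mathcal{A}_{\bm z,\mu}$; then for $v,w\in\mathcal{M}_{\bla,\mu}\subseteq V_{\bla}$ we have $\mathcal{S}_{\bla}(av,w)=\mathcal{S}_{\bla}(\widetilde a v,w)=\mathcal{S}_{\bla}(v,\widetilde a w)=\mathcal{S}_{\bla}(v,aw)$, which is the claim. (In the case $\mu=0$ one uses in addition that $\mathcal{A}_{\bm z,0}$ commutes with the diagonal action, so it preserves $(V_{\bla})^\sing$, but this is not needed for the bilinear identity itself.)

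I expect the only genuinely substantial input to be Proposition~\ref{prop fix B}, the pointwise $\widehat\varpi$-invariance of the Feigin--Frenkel center, which rests on the injectivity of the affine Harish-Chandra homomorphism and is already available. The remaining work is the formal extension of \eqref{eq inv-form} from generators to products (homomorphism versus anti-homomorphism) together with the commutativity observation, and neither of these should present any difficulty.
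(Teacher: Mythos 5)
Your proposal is correct and follows essentially the same route as the paper, which simply states that the lemma follows from \eqref{eq inv-form} and Proposition~\ref{prop fix B}; you have filled in the (routine but worth recording) details that the adjointness relation extends from the generators $X[s]$ to all of $\mathrm{U}(\g_-)$ because both sides are anti-homomorphisms, and that products of commuting symmetric operators remain symmetric.
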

\begin{proof}
The statement follows from \eqref{eq inv-form} and Proposition \ref{prop fix B}.
\end{proof}

\subsection{Frobenius algebra}\label{sec frob-alg}
Let $A$ be a finite-dimensional commutative unital algebra. If there exists a nondegenerate symmetric bilinear form $(\cdot,\cdot)$ on $A$ such that
\[
(ab,c)=(a,bc)\qquad \text{for all } a,b,c\in A,
\]
then it is clear that the regular and coregular representations of~$A$ are isomorphic. Thus~$A$ is a~Frobenius algebra.

We prepare the following lemma for the proof of the main theorem. Suppose $A$ is a unital commutative algebra acting on a finite-dimensional space $V$, $\rho\colon A\to \End(V)$. Let $\mathfrak A$ be the image of $A$ under $\rho$ in $\End(V)$. Clearly, $\mathfrak A$ is a finite-dimensional unital commutative algebra.

\begin{lem}\label{lem frobenius}
Suppose $A$ acts on $V$ cyclically. If there is a nondegenerate symmetric bilinear form $(\cdot|\cdot)$ on $V$ such that \[ (av|w)=(v|aw),\qquad\text{for all } a\in \mathfrak A, \quad v,w\in V,\] then the algebra $\mathfrak A$ is a Frobenius algebra. In particular, the $A$-module $V$ is perfectly integrable.
\end{lem}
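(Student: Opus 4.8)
The plan is to use the cyclicity hypothesis to identify $\mathfrak A$, in its regular representation, with the module $V$, and then to transport the given invariant form $(\cdot|\cdot)$ on $V$ into a Frobenius form on $\mathfrak A$. First I would fix a cyclic vector $v_0\in V$, so that $\mathfrak A v_0=V$, and consider the evaluation map
\[
\phi\colon \ \mathfrak A\to V,\qquad a\mapsto av_0,
\]
which is a homomorphism from the regular $\mathfrak A$-module to $V$. Surjectivity is exactly the statement that $v_0$ is cyclic. The crucial point is injectivity: if $av_0=0$, then for every $b\in\mathfrak A$ commutativity gives $a(bv_0)=b(av_0)=0$, so $a$ annihilates $\mathfrak A v_0=V$; since $\mathfrak A\subset\End(V)$ by construction, this forces $a=0$. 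Hence $\phi$ is an isomorphism of $\mathfrak A$-modules, and in particular $\dim\mathfrak A=\dim V$.

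Next I would pull the form back along $\phi$, defining
\[
(a,b):=(av_0\,|\,bv_0),\qquad a,b\in\mathfrak A.
\]
Symmetry of $(\cdot,\cdot)$ is inherited directly from symmetry of $(\cdot|\cdot)$, and nondegeneracy follows because $\phi$ is an isomorphism: if $(a,b)=0$ for all $b\in\mathfrak A$, then $(av_0\,|\,w)=0$ for all $w\in V$, whence $av_0=0$ and $a=0$. The Frobenius identity $(ab,c)=(a,bc)$ is where the invariance hypothesis enters. Writing $(ab,c)=(abv_0\,|\,cv_0)$ and applying $(bx|y)=(x|by)$ with $x=av_0$ and $y=cv_0$, together with the commutativity relations $b\,(av_0)=abv_0$ and $b\,(cv_0)=bcv_0$, one obtains
\[
(abv_0\,|\,cv_0)=(av_0\,|\,bcv_0)=(a,bc).
\]
Thus $(\cdot,\cdot)$ is a nondegenerate symmetric bilinear form on $\mathfrak A$ satisfying $(ab,c)=(a,bc)$, so $\mathfrak A$ is a Frobenius algebra.

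Perfect integrability of the $A$-module $V$ is then immediate from the definition: $A$ acts cyclically by hypothesis, and its image $\mathfrak A=A(V)$ has just been shown to be Frobenius. I do not expect a genuine obstacle in this argument; it is essentially a transport-of-structure computation. The one step deserving care is the injectivity of $\phi$, where commutativity of $\mathfrak A$ is indispensable -- it is precisely this, together with the invariance of the form, that upgrades bare cyclicity (which on its own is too weak, as Example~\ref{eg counter} shows) to the Frobenius property.
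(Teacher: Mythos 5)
Your proposal is correct and follows essentially the same route as the paper's own proof: the evaluation map at a cyclic vector identifies $\mathfrak A$ with $V$ (injectivity via commutativity), and the pulled-back form $(a,b)=(av_0\,|\,bv_0)$ is checked to be symmetric, nondegenerate, and associative. No meaningful differences.
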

\begin{proof}
Let $v^+$ be a cyclic vector of the action of $\mathfrak A$ on $V$. Define a linear map $\xi$ by
\[ \xi\colon \ \mathfrak A\to V,\qquad a\mapsto av^+.\]
Clearly, $\xi$ is surjective.

We claim that $\xi$ is injective. Indeed, suppose that $a\in \ker\xi$, then $a\in\End(V)$ and $av^+=0$. Hence $aa'v^+=a'av^+=0$ for all $a'\in\mathfrak A$, namely $a\, \xi(\mathfrak A)=0$. Since $\xi(\mathfrak A)=V$, we conclude that $aV=0$. Therefore $a=0$, which implies $\xi$ is injective and hence a bijection. Then it is clear that~$\xi$ defines an $\mathfrak A$-module isomorphism between the regular representation of $\mathfrak A$ and the $\mathfrak A$-module~$V$.

Define a bilinear form $(\cdot,\cdot)$ on $\mathfrak A$ as follows,
\[
(a,b)=(av^+|bv^+),\qquad\text{for all } a,b\in \mathfrak A.
\]
Since $(\cdot|\cdot)$ is symmetric, so is $(\cdot,\cdot)$. Because $(\cdot|\cdot)$ is nondegenerate and $\xi$ is bijective, the form $(\cdot,\cdot)$ is nondegenerate as well. For $a,b,c\in\mathfrak A$, we also have
\[
(ab,c)=(abv^+|cv^+)=(bav^+|cv^+)=(av^+|bcv^+)=(a,bc).
\]
Hence $\mathfrak A$ is a Frobenius algebra.
\end{proof}

Since Lemma \ref{cor intro} is central to the results of the present paper, we also provide a detailed proof for it.

\begin{proof}[Proof of Lemma \ref{cor intro}]
	To simplify the notation, we write $\mathfrak B$ for $\mathcal B(V)$. If the $\mathcal B$-module $V$ is perfectly integrable, then the $\mathfrak B$-module $V$ is isomorphic to the regular representation $\mathfrak B$ and coregular representation $\mathfrak B^*$. Note that the $\mathcal B$-eigenspaces are essentially the same as $\mathfrak B$-eigenspaces, thus we shall use $\mathfrak B$-eigenspaces instead.
	
	Let $\psi\in\mathfrak B^*$ be a $\mathfrak B$-eigenvector for the coregular representation $\mathfrak B^*$ with the eigenvalue $\xi_{\psi}\in\mathfrak B^*$, namely $a\psi=\xi_{\psi}(a)\psi$ for any $a\in\mathfrak B$. It is clear that $\xi_{\psi}$ is a character of $\mathfrak B$.
	
	On one hand, by definition of coregular representation, we have $(a\psi)(1)=\psi(a\cdot 1)=\psi(a)$ for any $a\in\mathfrak B$. On the other hand, since $\psi$ is a $\mathfrak B$-eigenvector, we have
	\[
	(a\psi)(1)=\big(\xi_{\psi}(a)\psi\big)(1)=\xi_{\psi}(a)\psi(1)
	\]
	for any $a\in\mathfrak B$. Therefore $\psi(a)=\xi_{\psi}(a)\psi(1)$ for any $a\in\mathfrak B$, which means the $\mathfrak B$-eigenvector $\psi$ is proportional to the corresponding eigenvalue $\xi_\psi$. This shows that every $\mathfrak B$-eigenspace in $V$ has dimension one.
	
	For any character $\xi\in\mathfrak B^*$ and any $a,b\in\mathfrak B$, we have
	\[
	(a\xi)(b)=\xi(ab)=\xi(a)\xi(b).
	\]
	Therefore, any character $\xi\in\mathfrak B^*$ is a $\mathfrak B$-eigenvector with the eigenvalue $\xi$.
	
	Note that the characters of $\mathfrak B$ are parameterized by their kernels, that is the maximal ideals of $\mathfrak B$. Combining the facts above, we conclude that there is a bijection between $\mathfrak B$-eigenspaces and $\mathsf{Specm}(\mathfrak B)$, namely the maximal ideals of $\mathfrak B$.
	
	We then show that each generalized $\mathfrak B$-eigenspace is a cyclic $\mathfrak B$-module. We call a finite-dimensional commutative algebra $A$ \emph{local} if it has a~unique maximal ideal $\mathfrak m$. Hence it has a unique character $\zeta\in A^*$. Moreover, $\mathfrak m$ is nilpotent, see \cite[Proposition~8.6]{AM69}. Therefore, the local algebra $A$ itself as the regular representation is the generalized $A$-eigenspace corresponding to the eigenvalue $\zeta$ as $a-\zeta(a)\in \ker\zeta=\mathfrak m$. Note that every finite-dimensional commutative algebra is a direct sum of local algebras, see \cite[Theorem~8.7]{AM69}. In addition, each local summand is a generalized $A$-eigenspace with the corresponding eigenvalue uniquely determined by the summand, see the first paragraph of the proof of \cite[Theorem~8.7]{AM69}. This part now follows from the fact that $V$ is isomorphic to the regular representation of~$\mathfrak B$.
	
 It is clearly that the algebra $\mathfrak B$ is maximal commutative in $\End(V)$. The last statement follows from the first half of the proof of Lemma \ref{lem frobenius}.
\end{proof}

\subsection{Perfect integrability of Gaudin models}\label{sec main}
The following is our main theorem which asserts Gaudin models with periodic and regular quasi-periodic boundary conditions are perfectly integrable.
\begin{thm}\label{thm main}
If $\mu\in \h^*$ is regular or if $\mu =0$, then the $\mathcal A_{\bm z,\mu}$-module $\mathcal M_{\bla,\mu}$ is perfectly integrable.
\end{thm}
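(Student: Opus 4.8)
The plan is to reduce the statement to Lemma~\ref{lem frobenius} by taking $A=\mathcal A_{\bm z,\mu}$, $V=\mathcal M_{\bla,\mu}$, $\mathfrak A=\mathfrak A_{\bla,\bm z,\mu}$, and choosing the bilinear form on $\mathcal M_{\bla,\mu}$ to be the (restricted) Shapovalov form $\mathcal S_{\bla}$. Lemma~\ref{lem frobenius} requires exactly three inputs: that $A$ act cyclically on $V$, that $V$ carry a nondegenerate symmetric bilinear form, and that every element of $\mathfrak A$ be symmetric with respect to this form. I would verify each of these in turn, treating the periodic case $\mu=0$ and the regular quasi-periodic case $\mu\in\h^*$ uniformly wherever possible.

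First, cyclicity is immediate from Theorem~\ref{thm cyclicity}, which asserts precisely that $\mathcal M_{\bla,\mu}$ is a cyclic $\mathcal A_{\bm z,\mu}$-module in both cases. Second, for the bilinear form I would use $\mathcal S_{\bla}=\otimes_{i=1}^\ell\mathcal S_{\la_i}$, which is nondegenerate and symmetric on $V_{\bla}$. When $\mu\in\h^*$ is regular we have $\mathcal M_{\bla,\mu}=V_{\bla}$, so $\mathcal S_{\bla}$ serves directly; when $\mu=0$ we have $\mathcal M_{\bla,\mu}=(V_{\bla})^\sing$, and the restriction of $\mathcal S_{\bla}$ to the singular subspace is again nondegenerate and symmetric, as recorded in the discussion of the Shapovalov form. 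Thus in either case $\mathcal M_{\bla,\mu}$ carries a nondegenerate symmetric form. Third, for the symmetry of the Bethe operators, I observe that $\mu\in\h^*$ holds in both cases, including $\mu=0$ since $0\in\h^*$, so Lemma~\ref{lem inv-form} applies and gives $\mathcal S_{\bla}(av,w)=\mathcal S_{\bla}(v,aw)$ for all $a\in\mathfrak A_{\bla,\bm z,\mu}$ and $v,w\in\mathcal M_{\bla,\mu}$. With all three hypotheses verified, Lemma~\ref{lem frobenius} yields that $\mathfrak A_{\bla,\bm z,\mu}$ is Frobenius and hence that the $\mathcal A_{\bm z,\mu}$-module $\mathcal M_{\bla,\mu}$ is perfectly integrable.

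The assembly above is routine; the genuine content sits upstream in the three results it invokes. The hardest of these, and what I would regard as the true obstacle behind the theorem, is the symmetry input of Lemma~\ref{lem inv-form}, which rests on Proposition~\ref{prop fix B}: the assertion that every element of the Feigin--Frenkel center $\zg$ is fixed by the Cartan anti-involution $\widehat\varpi$. This is exactly the step that required the injectivity of the affine Harish-Chandra homomorphism (Proposition~\ref{prop injectivity}) to upgrade the set-level invariance $\widehat\varpi(\zg)=\zg$ of Corollary~\ref{cor invariant} into elementwise fixedness. The only case-specific care within the assembly itself is confirming nondegeneracy of $\mathcal S_{\bla}$ on $(V_{\bla})^\sing$ for the periodic case, which is already guaranteed, so no further obstruction arises in combining the ingredients.
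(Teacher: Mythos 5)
Your proposal is correct and follows essentially the same route as the paper: cyclicity from Theorem~\ref{thm cyclicity}, symmetry of the Bethe operators with respect to $\mathcal S_{\bla}$ from Lemma~\ref{lem inv-form} (covering $\mu=0$ since $0\in\h^*$), and then Lemma~\ref{lem frobenius} to conclude that $\mathfrak A_{\bla,\bm z,\mu}$ is Frobenius. The only difference is that you spell out the nondegeneracy of $\mathcal S_{\bla}$ on $(V_{\bla})^\sing$ explicitly, which the paper leaves implicit.
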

\begin{proof}
By Theorem \ref{thm cyclicity}, Gaudin algebra $\mathcal A_{\bm z,\mu}$ acts on $\mathcal M_{\bla,\mu}$ cyclically. Recall that $\rho_{\bla,\bm z,\mu}\colon \mathcal A_{\bm z,\mu}\allowbreak \to \End(\mathcal M_{\bla,\mu})$ and $\mathfrak A_{\bla,\bm z,\mu}=\rho_{\bla,\bm z,\mu}(\mathcal A_{\bm z,\mu})$. It remains to show that $\mathfrak A_{\bla,\bm z,\mu}$ is Frobenius.

By Lemma \ref{lem inv-form}, we can apply Lemma~\ref{lem frobenius} for the case $A=\mathcal A_{\bm z,\mu}$, $\mathfrak A=\mathfrak A_{\bla,\bm z,\mu}$, $V=\mathcal M_{\bla,\mu}$, and $(\cdot|\cdot)=\mathcal S_{\bla}(\cdot,\cdot)$. Therefore we conclude that the algebra $\mathfrak A_{\bla,\bm z,\mu}$ is a Frobenius algebra.
\end{proof}

Theorem \ref{thm main} gives the following important facts. By Theorem~\ref{thm main}, Lemma~\ref{cor intro}, and \cite[Corollary~3.3]{Ryb18}, we see that the joint eigenvectors (up to proportionality) of the Gaudin algebra in $V_{\bla}^\sing$ are in one-to-one correspondence with monodromy-free $^L\g$-opers on the projective line with regular singularities at the points $z_1,\dots,z_\ell,\infty$ and the prescribed residues at the singular points. Here $z_1,\dots,z_\ell$ are {\it arbitrary} pairwise distinct complex numbers, cf.\ \cite[Corollary~3.4]{Ryb18}. Similarly, when~$\g$ is of type~B or~C (resp.~G$_2$), one deduces from \cite[Theorem~4.5]{LMV17} (resp.\ \cite[Theorem~5.8]{LM19g2}) that there exists a bijection between joint eigenvectors (up to proportionality) of the Gaudin algebra in $V_{\bla}^\sing$ and self-dual (resp.\ self-self-dual) spaces of polynomials in a~suitable intersection of Schubert cells in Grassmannian.

\subsection[Conjecture for general $\mu\in\g^*$]{Conjecture for general $\boldsymbol{\mu\in\g^*}$}\label{sec general conj}

For an arbitrary $\mu\in\g^*\cong \g$, there exists an element $g\in G$ such that $g\mu g^{-1}$ is in the negative Borel part $\mathfrak b_-=\mathfrak n_-\oplus \h$. Thus, without loss of generality, we can assume that $\mu\in \mathfrak b_-$.

Let $\mathfrak z_{\mu}(\g)$ be the centralizer of $\mu$ in $\g$. It is known that $\mc A_{\bm z,\mu}$ commutes with the diagonal action of $\mathfrak z_{\mu}(\g)$, see \cite[Proposition~4]{Ryb06}.

Let $V_{\bla}$ be as before. Define $\mc M_{\bla,\mu}$ as a subspace of $V_{\bla}$ by
\[
\mc M_{\bla,\mu}:=\{v\in V_{\bla}\,|\, x v=0, \text{ for all }x\in \mathfrak z_{\mu}(\g)\cap \mathfrak n_+\}.
\]
Then $\mc A_{\bm z,\mu}$ acts on $\mc M_{\bla,\mu}$.
\begin{conj}\label{conj frob-ger}
The $\mc A_{\bm z,\mu}$-module $\mc M_{\bla,\mu}$ is perfectly integrable.
\end{conj}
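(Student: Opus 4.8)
The plan is to reproduce the proof of Theorem~\ref{thm main} in this more general setting, reducing perfect integrability of $\mc M_{\bla,\mu}$ to the two hypotheses of Lemma~\ref{lem frobenius}: first, that $\mc A_{\bm z,\mu}$ acts cyclically on $\mc M_{\bla,\mu}$; and second, that there is a nondegenerate symmetric bilinear form on $\mc M_{\bla,\mu}$ for which every operator in the image $\mathfrak A_{\bla,\bm z,\mu}$ is symmetric. Granting both, Lemma~\ref{lem frobenius} at once yields that $\mathfrak A_{\bla,\bm z,\mu}$ is Frobenius and that $\mc M_{\bla,\mu}$ is perfectly integrable.

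For cyclicity I would extend \cite[Corollary~5]{FFR10} from regular $\mu$ to arbitrary $\mu\in\fkb_-$. The definition of $\mc M_{\bla,\mu}$ as the joint kernel of $\mathfrak z_\mu(\g)\cap\n_+$ is exactly what makes this plausible: it specializes to $(V_{\bla})^\sing$ when $\mu=0$ and to $V_{\bla}$ when $\mu\in\h^*$ is regular, matching the two cases of Theorem~\ref{thm cyclicity}. The natural route is the geometric one, identifying joint eigenlines with monodromy-free $^L\g$-opers having regular singularities at $z_1,\dots,z_\ell$ and a singularity at $\infty$ determined by $\mu$, and then deducing cyclicity from a dimension count matching $\dim\mc M_{\bla,\mu}$; making this count precise for non-regular $\mu$ is the first difficulty.

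The construction of the form is where the essential new difficulty lies. For $\mu\in\h^*$ the invariance \eqref{eq inv-form} rested on two facts: that $\mu\circ\varpi=\mu$, which holds because $\mu$ annihilates $\n_+\oplus\n_-$ and $\varpi$ fixes $\h$; and that $\widehat\varpi$ fixes every element of $\zg$ (Proposition~\ref{prop fix B}). For a general $\mu\in\fkb_-$ the first fact fails, since $\varpi$ interchanges $\n_+$ and $\n_-$ and therefore moves the $\n_-$-part of $\mu$ into $\n_+$; the Shapovalov form $\mc S_{\bla}$ is then no longer invariant. The plan is to replace $\varpi$ by an anti-involution $\theta$ of $\g$ adapted to $\mu$, satisfying $\mu\circ\theta=\mu$ and interacting with $\mathfrak z_\mu(\g)$ so that the associated $\theta$-contravariant form on $V_{\bla}$ restricts nondegenerately to $\mc M_{\bla,\mu}$. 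To run the argument of Lemma~\ref{lem inv-form} one then needs an analog of Proposition~\ref{prop fix B}, namely $\widehat\theta(S)=S$ for all $S\in\zg$.

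I expect this last requirement to be the main obstacle. The proof of Proposition~\ref{prop fix B} used the specific compatibility of $\varpi$ with the triangular decomposition and with the affine Harish-Chandra homomorphism $\mathfrak f$, and there is no reason a $\mu$-adapted $\theta$ should fix $t^{-1}\h[t^{-1}]$ pointwise or even preserve $\zg$. A more promising alternative is to bypass the bilinear form and establish the Frobenius property geometrically: one identifies $\mathsf{Spec}(\mathfrak A_{\bla,\bm z,\mu})$ with the scheme of monodromy-free $^L\g$-opers carrying the prescribed singularity data and shows that the resulting Artinian algebra $\mathfrak A_{\bla,\bm z,\mu}$ is Gorenstein, which for a finite-dimensional commutative algebra is equivalent to being Frobenius, as was done for $\gl_N$ via Schubert intersections in \cite{MTV09}. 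This would require scheme-theoretic, rather than set-theoretic, control of the oper space for arbitrary $\mu$, precisely the input not yet available in \cite{FFR10,Ryb18}.
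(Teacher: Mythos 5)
The statement you are asked to prove is Conjecture~\ref{conj frob-ger}, which the paper itself leaves open: no proof is given there, and the main theorem (Theorem~\ref{thm main}) only covers the two special cases $\mu=0$ and $\mu\in\h^*$ regular. So there is no proof in the paper to compare against, and your proposal, by its own admission, does not supply one either. You have correctly identified the shape a proof would have to take (the two hypotheses of Lemma~\ref{lem frobenius}) and, more importantly, you have correctly located the two genuine obstructions, but both are left unresolved.

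Concretely: (1)~Cyclicity of $\mathcal A_{\bm z,\mu}$ on $\mc M_{\bla,\mu}$ for $\mu\in\fkb_-$ neither zero nor regular semisimple is not contained in \cite{FFR10} or \cite{Ryb18}, and your ``dimension count matching $\dim \mc M_{\bla,\mu}$'' for the relevant space of opers with irregular singularity at $\infty$ is precisely the missing input, not a step you can currently carry out. (2)~For the Frobenius property, you rightly observe that the Shapovalov-form argument breaks because $\mu\circ\varpi\neq\mu$ when $\mu$ has a nonzero $\n_-$-component, and that a $\mu$-adapted anti-involution $\theta$ would need an analog of Proposition~\ref{prop fix B}, which there is no reason to expect; your fallback via Gorenstein-ness of the oper scheme is the strategy of \cite{MTV09} in type~A, but as you note it requires scheme-theoretic control that is not available in general type. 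In short, your text is a sound research plan that matches the paper's framing of why this case is conjectural, but it is not a proof, and it should not be presented as one.
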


\subsection*{Acknowledgements}

The author is grateful to E.~Mukhin and V.~Tarasov for interesting discussions and helpful suggestions. The author also thanks
the referees for their comments and suggestions that substantially improved the first version of this paper. This work was partially supported by a grant from the Simons Foundation~\#353831.

\pdfbookmark[1]{References}{ref}
\LastPageEnding


\begin{thebibliography}{99}
\footnotesize\itemsep=0pt

\bibitem{AM69}
Atiyah M.F., Macdonald I.G., Introduction to commutative algebra,
 Addison-Wesley Publishing Co., Reading, Mass.~-- London~-- Don Mills, Ont.,
 1969.

\bibitem{CLV20}
Chernyak D., Leurent S., Volin D., Completeness of Wronskian Bethe equations
 for rational $\mathfrak{gl}(m|n)$ spin chain, \href{https://arxiv.org/abs/2004.02865}{arXiv:2004.02865}.

\bibitem{FFR94}
Feigin B., Frenkel E., Reshetikhin N., Gaudin model, {B}ethe ansatz and
 critical level, \href{https://doi.org/10.1007/BF02099300}{\textit{Comm. Math. Phys.}} \textbf{166} (1994), 27--62,
 \href{https://arxiv.org/abs/hep-th/9402022}{arXiv:hep-th/9402022}.

\bibitem{FFR10}
Feigin B., Frenkel E., Rybnikov L., Opers with irregular singularity and
 spectra of the shift of argument subalgebra, \href{https://doi.org/10.1215/00127094-2010-057}{\textit{Duke Math.~J.}}
 \textbf{155} (2010), 337--363, \href{https://arxiv.org/abs/0712.1183}{arXiv:0712.1183}.

\bibitem{FFT10}
Feigin B., Frenkel E., Toledano~Laredo V., Gaudin models with irregular
 singularities, \href{https://doi.org/10.1016/j.aim.2009.09.007}{\textit{Adv. Math.}} \textbf{223} (2010), 873--948,
 \href{https://arxiv.org/abs/math.QA/0612798}{arXiv:math.QA/0612798}.

\bibitem{Fre07}
Frenkel E., Langlands correspondence for loop groups, \textit{Cambridge Studies
 in Advanced Mathematics}, Vol.~103, Cambridge University Press, Cambridge,
 2007.

\bibitem{IR20}
Ilin A., Rybnikov L., On classical limits of {B}ethe subalgebras in {Y}angians,
 \href{https://arxiv.org/abs/2009.06934}{arXiv:2009.06934}.

\bibitem{Lu18}
Lu K., Lower bounds for numbers of real self-dual spaces in problems of
 {S}chubert calculus, \href{https://doi.org/10.3842/SIGMA.2018.046}{\textit{SIGMA}} \textbf{14} (2018), 046, 15~pages,
 \href{https://arxiv.org/abs/1710.06534}{arXiv:1710.06534}.

\bibitem{LM19g2}
Lu K., Mukhin E., On the {G}audin model of type {$\rm G_2$}, \href{https://doi.org/10.1142/S0219199718500128}{\textit{Commun.
 Contemp. Math.}} \textbf{21} (2019), 1850012, 31~pages, \href{https://arxiv.org/abs/1711.02511}{arXiv:1711.02511}.

\bibitem{LM19}
Lu K., Mukhin E., On the supersymmetric {XXX} spin chains associated to
 $\mathfrak{gl}_{1|1}$, \href{https://arxiv.org/abs/1910.13360}{arXiv:1910.13360}.

\bibitem{LMV17}
Lu K., Mukhin E., Varchenko A., Self-dual {G}rassmannian, {W}ronski map, and
 representations of {$\mathfrak{gl}_N$}, {$\mathfrak{sp}_{2r}$},
 {$\mathfrak{so}_{2r+1}$}, \href{https://doi.org/10.4310/PAMQ.2017.v13.n2.a4}{\textit{Pure Appl. Math.~Q.}} \textbf{13} (2017),
 291--335, \href{https://arxiv.org/abs/1705.02048}{arXiv:1705.02048}.

\bibitem{Mol12}
Molev A.I., Feigin--{F}renkel center in types {$B$}, {$C$} and {$D$},
 \href{https://doi.org/10.1007/s00222-012-0390-7}{\textit{Invent. Math.}} \textbf{191} (2013), 1--34, \href{https://arxiv.org/abs/1105.2341}{arXiv:1105.2341}.

\bibitem{MT16}
Mukhin E., Tarasov V., Lower bounds for numbers of real solutions in problems
 of {S}chubert calculus, \href{https://doi.org/10.1007/s11511-016-0143-3}{\textit{Acta Math.}} \textbf{217} (2016), 177--193,
 \href{https://arxiv.org/abs/1404.7194}{arXiv:1404.7194}.

\bibitem{MTV06}
Mukhin E., Tarasov V., Varchenko A., Bethe eigenvectors of higher transfer
 matrices, \href{https://doi.org/10.1088/1742-5468/2006/08/p08002}{\textit{J.~Stat. Mech. Theory Exp.}} \textbf{2006} (2006), P08002,
 44~pages, \href{https://arxiv.org/abs/math.QA/0605015}{arXiv:math.QA/0605015}.

\bibitem{MTV08}
Mukhin E., Tarasov V., Varchenko A., Spaces of quasi-exponentials and
 representations of {$\mathfrak{gl}_N$}, \href{https://doi.org/10.1088/1751-8113/41/19/194017}{\textit{J.~Phys.~A: Math. Theor.}}
 \textbf{41} (2008), 194017, 28~pages, \href{https://arxiv.org/abs/0801.3120}{arXiv:0801.3120}.

\bibitem{MTV09}
Mukhin E., Tarasov V., Varchenko A., Schubert calculus and representations of
 the general linear group, \href{https://doi.org/10.1090/S0894-0347-09-00640-7}{\textit{J.~Amer. Math. Soc.}} \textbf{22} (2009),
 909--940, \href{https://arxiv.org/abs/0711.4079}{arXiv:0711.4079}.

\bibitem{MTV14}
Mukhin E., Tarasov V., Varchenko A., Spaces of quasi-exponentials and
 representations of the {Y}angian {$Y(\mathfrak{gl}_N)$}, \href{https://doi.org/10.1007/s00031-014-9275-8}{\textit{Transform.
 Groups}} \textbf{19} (2014), 861--885, \href{https://arxiv.org/abs/1303.1578}{arXiv:1303.1578}.

\bibitem{RTV14}
Rim\'anyi R., Tarasov V., Varchenko A., Trigonometric weight functions as
 {$K$}-theoretic stable envelope maps for the cotangent bundle of a flag
 variety, \href{https://doi.org/10.1016/j.geomphys.2015.04.002}{\textit{J.~Geom. Phys.}} \textbf{94} (2015), 81--119,
 \href{https://arxiv.org/abs/1411.0478}{arXiv:1411.0478}.

\bibitem{Ryb06}
Rybnikov L., The argument shift method and the Gaudin model, \href{https://doi.org/10.1007/s10688-006-0030-3}{\textit{Funct.
 Anal. Appl.}} \textbf{40} (2006), 188--199, \href{https://arxiv.org/abs/math.RT/0606380}{arXiv:math.RT/0606380}.

\bibitem{Ryb08}
Rybnikov L., Uniqueness of higher {G}audin {H}amiltonians, \href{https://doi.org/10.1016/S0034-4877(08)80013-4}{\textit{Rep. Math.
 Phys.}} \textbf{61} (2008), 247--252, \href{https://arxiv.org/abs/math.QA/0608588}{arXiv:math.QA/0608588}.

\bibitem{Ryb18}
Rybnikov L., Proof of the {G}audin {B}ethe ansatz conjecture, \href{https://doi.org/10.1093/imrn/rny245}{\textit{Int.
 Math. Res. Not.}} \textbf{2020} (2020), 8766--8785, \href{https://arxiv.org/abs/1608.04625}{arXiv:1608.04625}.

\end{thebibliography}
\end{document}